\pdfoutput=1
\documentclass[11pt]{article}
\usepackage{graphicx, verbatim}

\usepackage{amsfonts}
\usepackage{amssymb}
\usepackage{amstext}
\usepackage{amsmath}
\usepackage{amsthm,verbatim}
\usepackage{xspace}
\newcommand{\Prob}{\ensuremath{\mathbb{P}}}
\renewcommand{\Pr}{\ensuremath{\mathbb{P}}}
\newcommand{\E}{\ensuremath{\mathbb E}}
\newcommand{\R}{\ensuremath{\mathbb R}}

  \def\beq{\begin{eqnarray}} \def\eeq{\end{eqnarray}} \def\ben{\begin{enumerate}}
\def\een{\end{enumerate}} \def\bit{\begin{itemize}}
\def\bel{\begin{lemma}}
\def\eel{\end{lemma}}
\def\eit{\end{itemize}} \def\beqs{\begin{eqnarray*}} \def\eeqs{\end{eqnarray*}} \def\bel{\begin{lemma}} \def\eel{\end{lemma}}
\newcommand{\N}{\mathbb{N}}   \newcommand{\C}{\mathcal{C}}

 \newcommand{\I}{\mathbb{I}}   \newcommand{\p}{\mathbb{P}}
\newcommand{\PP}{\mathcal P}    
 \newcommand{\MM}{\mathcal M} \newcommand{\la}{\lambda}  
   \def\eps{{\epsilon}}  \def\ie{i.\,e.\,} 

\newcommand{\vol}[1]{\ensuremath{\mathrm{vol}\inp{#1}}}

\newcommand{\G}{\mathcal{G}}
\newcommand{\hit}{\mathcal{H}}

\usepackage[english]{babel}\usepackage[utf8]{inputenc}
\usepackage[noadjust]{cite}
\usepackage{etoolbox}
\usepackage{amsmath, amsthm}
\usepackage{amsfonts, amssymb}
\usepackage{bm}
\usepackage{ifthen}
\usepackage[boxed, linesnumbered]{algorithm2e}

\usepackage{letltxmacro}

\usepackage{nicefrac}
\usepackage{xstring}
\usepackage{pgf}
\usepackage{tikz}
\usetikzlibrary{arrows,automata}
\usepackage[T1]{fontenc}
\usepackage[tracking]{microtype}
\usepackage[ttscale=0.81]{libertine}
\usepackage[libertine]{newtxmath}

\renewcommand{\mathbf}[1]{\bm{#1}}

\usepackage{geometry}
\usepackage{fullpage}

\usepackage{pdfsync, graphicx}

\usepackage{framed}
\usepackage{marginnote}

\usepackage{todo}

\usepackage{booktabs,siunitx}

\usepackage{caption}
\usepackage{subcaption}
\usepackage{wrapfig}

\usepackage{hyperref}
\hypersetup{
  colorlinks=true,
  linkcolor=green!30!black,
  linktoc=all,
  citecolor=blue,
  bookmarksnumbered=true
}
\newcommand{\arxiv}[1]{\href{https://arxiv.org/abs/#1}{arXiv:#1}}
\usepackage{cleveref}

\newtheorem{theorem}{Theorem}[section]
\newtheorem{observation}[theorem]{Observation}

\newtheorem{fact}[theorem]{Fact}
\newtheorem{lemma}[theorem]{Lemma}

\newtheorem{proposition}[theorem]{Proposition}

\theoremstyle{definition}
\newtheorem{definition}[theorem]{Definition}

\crefname{theorem}{Theorem}{Theorems}
\crefname{observation}{Observation}{Observations}
\crefname{claim}{Claim}{Claims}
\crefname{condition}{Condition}{Conditions}
\crefname{example}{Example}{Examples}
\crefname{fact}{Fact}{Facts}
\crefname{lemma}{Lemma}{Lemmas}
\crefname{corollary}{Corollary}{Corollaries}
\crefname{definition}{Definition}{Definitions}
\crefname{remark}{Remark}{Remarks}
\crefname{proposition}{Proposition}{Propositions}

\newcommand{\st}[0]{\ensuremath{\;\mathbf{:}\;}}

\newcommand{\abs}[1]{\ensuremath{\left|#1\right|}}
\newcommand{\norm}[2][]{\ensuremath{\Vert #2 \Vert_{#1}}}

\newcommand{\ceil}[1]{\ensuremath{\left\lceil#1\right\rceil}}

\newcommand{\inb}[1]{\left\{#1\right\}}
\newcommand{\inp}[1]{\left(#1\right)}
\newcommand{\insq}[1]{\left[#1\right]}

\makeatletter
\newcommand*{\defeq}{\mathrel{\rlap{\raisebox{0.3ex}{$\m@th\cdot$}}\raisebox{-0.3ex}{$\m@th\cdot$}}=}
\newcommand*{\eqdef}{=
  \mathrel{\rlap{\raisebox{0.3ex}{$\m@th\cdot$}}\raisebox{-0.3ex}{$\m@th\cdot$}}}
\makeatother

\newcommand{\nfrac}[3][]{\nicefrac[#1]{#2}{#3}}

\newcommand{\poly}[1]{\ensuremath{\mathop{\mathrm{poly}}\inp{#1}}}

 \vfuzz2pt \hfuzz2pt

\theoremstyle{definition}

\theoremstyle{remark}

\numberwithin{equation}{section}

\begin{document}

\title{On the mixing time of coordinate Hit-and-Run}\author{Hariharan
  Narayanan\thanks{Email: \texttt{hariharan.narayanan@tifr.res.in}. TIFR Mumbai.}
  \and Piyush Srivastava\thanks{Email: \texttt{piyush.srivastava@tifr.res.in}. TIFR Mumbai.}}
\date{}
\maketitle
\begin{abstract}
  We obtain a polynomial upper bound on the mixing time
  $T_{CHR}(\epsilon)$ of the coordinate Hit-and-Run random walk on an
  $n-$dimensional convex body, where $T_{CHR}(\epsilon)$ is the number
  of steps needed in order to reach within $\epsilon$ of the uniform
  distribution with respect to the total variation distance, starting
  from a warm start (i.e., a distribution which has a density with
  respect to the uniform distribution on the convex body that is
  bounded above by a constant). Our upper bound is polynomial in
  $n, R$ and $\frac{1}{\epsilon}$, where we assume that the convex
  body contains the unit $\Vert\cdot\Vert_\infty$-unit ball $B_\infty$
  and is contained in its $R$-dilation $R\cdot B_\infty$.  Whether
  coordinate Hit-and-Run has a polynomial mixing time has been an open
  question.
\end{abstract}

{\let\thefootnote\relax
  \footnotetext{HN and PS acknowledge support from the Department of Atomic
    Energy, Government of India, under project no. RTI4001, the Ramanujan
    Fellowship of SERB, and the Infosys foundation, through its support for the
    Infosys-Chandrasekharan virtual center for Random Geometry.  PS acknowledges
    support from Adobe Systems Incorporated via a gift to TIFR.  The contents of
    this paper do not necessarily reflect the views of the funding agencies listed
    above.}
}

\thispagestyle{empty}

\newpage
\setcounter{page}{1}

\section{Introduction}

\subsection{Background}
Approximate uniform sampling from convex bodies is an important algorithmic
primitive and has consequently been the focus of a large body of research.  The
first algorithm for this problem that ran in time polynomial in the ambient
dimension of the convex body was given in the seminal work of Dyer, Frieze and
Kannan~\cite{DFK91}, and was based on a nearest-neighbor random walk on a
discrete grid inside the convex body.  This resulted in the first polynomial
time (in the ambient dimension and $1/\epsilon$) algorithm for obtaining a
$(1 \pm \epsilon)$-factor approximation for the volume of an input convex body
$\R^n$, given access to a so-called ``well-guaranteed membership
oracle''\footnote{A convex body $K$ is said to have a ``well-guaranteed
  membership oracle'' if it has a membership oracle, and further, if there are
  given positive numbers $r < R$ such that the body contains a ball (in an
  appropriate norm) of radius $r$ and is contained in a ball (in the same norm)
  of radius $R$.  The run-time of the algorithm of Dyer, Frieze and Kannan, and
  of many convex sampling algorithms that appeared later, is polynomial also in
  the ratio $R/r$.} for the body.

Following the success of the Markov chain Monte Carlo approach for the problem
in the work of Dyer, Frieze and Kannan, there have been many attempts at
designing other random walks for sampling from convex bodies with provably better
mixing times; for a partial list, see \cite{chen2018fast, cousins2017efficient,
  kannan1997random, kannan2012random, Laddha, lee2017geodesic,
  lee2017convergence, L99, LS93, HN2, Mangoubi}.  The goal in this line of work
is twofold.  First, the Markov chain should be such that each step of the random
walk can be implemented easily, under reasonable regularity conditions on the
convex body, and second, the number of steps the resulting Markov chain takes to
come $\epsilon$-close to the stationary distribution (in total variation
distance) should be bounded by a small polynomial in $n$ (the ambient
dimension), $1/\epsilon$, and the ratio of the in-radius and the diameter of the
body.

One such random walk is the so-called ``Hit-and-Run'' walk, first proposed in
the work of Smith~\cite{Smith}.  Given a convex body $K$, the samples
$x_1, x_2, \dots$ of this walk are generated as follows. Given $x_i = x$, choose
a direction $u$ on the unit sphere $\mathbb{S}^{n-1}$ uniformly at random, and
let $\ell$ be the unique chord of $K$ through $x$ along the direction $u$.
Then, pick a random point $y$ from the uniform measure on this chord and set
$x_{i+1} = y$.  A simplified version of this is the \emph{coordinate
  Hit-and-Run} (CHR) random walk, in which instead of sampling the direction $u$
from the unit sphere, it is sampled instead from the discrete uniform
distribution on the canonical basis vectors $e_1, e_2, \dots, e_n$.  This
restriction can make the CHR random walk easier to implement than the original
Hit-and-Run walk in certain settings~\cite{haraldsdottir2017chrr}.

Lovász~\cite{L99} analyzed the mixing time of the Hit-and-Run random walk from a
``warm start'', i.e., a starting distribution whose Radon-Nikodym derivative
with respect to the uniform distribution on $K$ is bounded above by some given
quantity $M$, and showed that with such a starting distribution, the Hit-and-Run
walk mixes to within $\epsilon$ total variation distance of the uniform
distribution on a convex body $K$ in time $\poly{n, R, M}$ (here, it is assumed
that $K$ is contained with the Euclidean ball of radius $R$ and contains within
it the standard Euclidean ball of radius $1$).  However, despite its
applicability in practice~\cite{haraldsdottir2017chrr,Fallahi20}, establishing a
similar polynomial mixing time bound on for the coordinate Hit-and-Run walk has
remained an open problem (see for example, \cite[pp.~23--24]{LeeVemp_KLS}).

\subsection{Our contribution}
In this paper, we prove that under regularity conditions similar to those in
\cite{L99}, the coordinate Hit-and-Run walk has a polynomial mixing time from a
warm start.  More formally, let $B_\infty$ denote the unit ball
$\inb{x \in \R^n \st \norm[\infty]{x} \leq 1}$ in $\norm[\infty]{\cdot}$. As
usual, for any positive real $r$, the notation $r\cdot B_\infty$ denotes the set
$\inb{r\cdot x \st x \in B_\infty}$. For $M \geq 1$, we say that a probability
distribution $\mu$ is ``$M$-warm'' with respect to another probability
distribution $\pi$ if $\mu$ is absolutely continuous with respect to $\pi$ and
if the Radon-Nikodym derivative of $\mu$ with respect to $\pi$ is pointwise
bounded above by $M$.  Our main result is then the following (see
\cref{thm:main} for a more quantitative statement).

\begin{theorem}\label{thm:main-intro}
  Let $K \subset \R^n$ be a closed convex body such that
  $B_\infty \subseteq K \subseteq R\cdot B_\infty$, for some $R \geq 1$.  Let
  $\mu_0$ be a probability distribution on $K$ that is $M$-warm with respect to
  the uniform distribution $\pi_K$ on $K$.  Starting with the initial ``warm
  start'' distribution $\mu_0$, let $\mu_k$ denote the probability distribution
  of the point obtained after $k \geq 1$ steps of the coordinate Hit-and-Run
  walk for $K$.  Then, for any $\epsilon > 0$ and
  $k \geq \tilde{O}\inp{\frac{n^{7}R^{4}M^{4}}{\epsilon^{4}}}$, we have
  $d_{TV}(\mu_k, \pi_K) \leq \eps.$ (Here $d_{TV}$ denotes the total variation
  distance between probability measures.)
\end{theorem}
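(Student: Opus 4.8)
The plan is to bound $T_{CHR}(\eps)$ through a lower bound on the restricted conductance of the walk, in the Lov\'asz--Simonovits framework. Write $P$ for the CHR transition kernel on $K$, and note that $\pi_K$ is reversible for $P$ (detailed balance holds because two points lying on a common axis-chord see that chord with the same length). For an $M$-warm start it suffices to control the $s$-conductance
\[
  \Phi_s \;:=\; \inf_{\,s \le \pi_K(S) \le 1/2\,}\ \frac{\int_S P_x(K\setminus S)\,d\pi_K(x)}{\pi_K(S)},
\]
for a small parameter $s$ that I will take of order $\eps/M$: the Lov\'asz--Simonovits machinery then shows $d_{TV}(\mu_k,\pi_K)\le\eps$ once $s\lesssim\eps/M$ and $k\gtrsim\Phi_s^{-2}\log(M/(\eps s))$. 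Hence everything reduces to proving $\Phi_s\ge 1/\poly{n,R,M/\eps}$.

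For the conductance lower bound I would use the usual three-set decomposition together with an isoperimetric inequality for convex bodies: the Euclidean Lov\'asz--Simonovits inequality $\pi_K(K\setminus(S_1\cup S_2))\ge\frac{2\,\mathrm{dist}(S_1,S_2)}{\mathrm{diam}(K)}\min(\pi_K(S_1),\pi_K(S_2))$, using $\mathrm{diam}(K)\le 2R\sqrt n$ (a coordinate-adapted variant would improve constants). Fix measurable $S$ with $s\le\pi_K(S)\le 1/2$ and a threshold $h>0$, and set $S_1=\{x\in S:P_x(K\setminus S)<h\}$ and $S_2=\{x\in K\setminus S:P_x(S)<h\}$. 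If $\pi_K(S\setminus S_1)\ge\pi_K(S)/2$, or (by reversibility) $\pi_K((K\setminus S)\setminus S_2)\ge\pi_K(K\setminus S)/2$, then the escape probability $\int_S P_x(K\setminus S)\,d\pi_K$ is already $\gtrsim h\,\pi_K(S)$ and we are done. In the remaining case $S_1,S_2$ are both large, and the escape probability is $\gtrsim h\,\pi_K(K\setminus(S_1\cup S_2))\gtrsim\frac{h\,\mathrm{dist}(S_1,S_2)}{R\sqrt n}\min(\pi_K(S_1),\pi_K(S_2))$; so the crux is a lower bound on $\mathrm{dist}(S_1,S_2)$.

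This is exactly where coordinate Hit-and-Run departs from ordinary Hit-and-Run, and where I expect the main difficulty. A single CHR step from $x$ moves along only one of the $n$ axes, so $P_x$ is supported on the union of the $n$ axis-chords through $x$; in particular $P_x$ and $P_y$ are mutually singular for generic $x\ne y$, and one cannot argue---as Lov\'asz does for Hit-and-Run---that $x\in S_1,\ y\in S_2$ forces $\norm{x-y}$ large by overlapping one-step densities. Instead one must use the one-dimensional structure directly: $x\in S_1$ means that along every coordinate axis the chord through $x$ is at least $(1-nh)$-covered by $S$, and symmetrically for $S_2$. The plan is to convert this into genuine Euclidean separation: after discarding from $K$ a boundary layer of width $\eta$ (which costs only $O(n\eta)$ in measure, since the in-radius is $\ge 1$, and forces all axis-chords of the remaining points to have length $\ge 2\eta$), and after discarding a further set of $\pi_K$-measure $O(s)$---both absorbed into the conductance slack---one aims to show that each surviving point of $S_1$ has a Euclidean ball of some radius $\rho$ contained in $S$ up to a null set, and symmetrically for $S_2$, whence $\mathrm{dist}(S_1',S_2')\ge 2\rho$. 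The delicate part is that $S$ is an \emph{arbitrary} measurable set, so ``$(1-nh)$-full along every axis'' need not pointwise imply ``Euclidean-thick'' (e.g.\ $S$ could be $K$ minus a fine grid of tiny holes); this forces a Fubini/averaging argument over chords and a restriction to near-minimizers of $\Phi_s$ (regularizing $S$ without increasing $\int_S P_x(K\setminus S)\,d\pi_K/\pi_K(S)$ by more than a constant factor), which is why the affordable radius $\rho$ comes out only as a small power of $h$ and of $s$ rather than as a constant. Establishing this geometric lemma---roughly, that a set which is almost chord-convex in every coordinate direction is, up to small measure, Euclidean-thick---is the heart of the argument; the rest is bookkeeping.

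Finally I would optimize the parameters. Choosing $h$ of order $1/\poly{n}$, $\eta$ of order $s/n$, and $s$ of order $\eps/M$, the geometric lemma yields $\mathrm{dist}(S_1',S_2')=\rho\gtrsim\poly{\eps/M}/\poly{n,R}$, so the isoperimetric inequality gives $\Phi_s\gtrsim\frac{h\rho}{R\sqrt n}\gtrsim\poly{\eps/M}/\poly{n,R}$, and the Lov\'asz--Simonovits bound then gives $T_{CHR}(\eps)\le\poly{n,R}\cdot\poly{M/\eps}$. Tracking the exponents carefully through the geometric lemma and the two-case conductance analysis is what produces the stated $\tilde O(n^{7}R^{4}M^{4}/\eps^{4})$; any slack in the lemma shows up directly in the powers of $n$, $R$, $M$, and $1/\eps$.
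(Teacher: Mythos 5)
Your overall framework (reversibility of CHR with respect to $\pi_K$, an $s$-conductance bound with $s \asymp \eps/M$ fed into the Lov\'asz--Simonovits mixing theorem, the Euclidean isoperimetric inequality with diameter $2R\sqrt n$, and the standard two-case escape-probability decomposition with the sets $S_1,S_2$ of low escape probability) is the same as the paper's. But the step that carries the entire difficulty of the problem is missing. Everything in your plan hinges on the ``geometric lemma'' that a measurable set $S$ whose coordinate chords through points of $S_1$ are $(1-nh)$-covered must, after discarding $O(s)$ measure and a boundary layer, be Euclidean-thick at each surviving point, so that $\mathrm{dist}(S_1',S_2')\ge 2\rho$ with $\rho \ge \poly{\eps/M}/\poly{n,R}$. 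You do not prove this, and as you yourself note it fails pointwise (take $S$ to be $K$ minus a fine grid of tiny holes, which makes every chord nearly full while no Euclidean ball lies in $S$); the proposed repair --- a Fubini/averaging argument plus restriction to ``near-minimizers'' of $\Phi_s$ that regularizes $S$ without increasing the conductance ratio --- is precisely the open difficulty, not bookkeeping. The infimum defining $\Phi_s$ ranges over arbitrary measurable sets, and no mechanism is given for replacing a near-minimizer by a regularized set for which chord-fullness upgrades to Euclidean thickness; without it, $\mathrm{dist}(S_1,S_2)$ can be $0$ and the isoperimetric inequality yields nothing. So the proposal, as written, has a genuine gap exactly where coordinate Hit-and-Run departs from Hit-and-Run.

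For comparison, the paper circumvents this obstacle by a different device: it introduces an auxiliary Gaussian coordinate walk with step variance $\sigma^2$, shows via a coupling plus Chernoff and Gaussian tail bounds that its $\tau$-step distribution, $\tau=\lceil 20 n\ln n\rceil$, is within $O(n^{-5})$ in total variation of a mixture of full-rank axis-aligned Gaussians (\cref{lem:1.1}), deduces by Pinsker's inequality an overlap property $d_{TV}\bigl(\G_u^{(\tau)},\G_v^{(\tau)}\bigr)\le 3/4$ for starting points with $\norm[2]{u-v}\le\sigma$ that are $100\sigma\ln n$-deep inside $K$ (\cref{lem:1.2}), converts overlap into an ergodic-flow lower bound for the $\tau$-step chain by the Lov\'asz argument you sketched (\cref{lem:Kr,lem:4.3}), and finally transfers the bound to single-step CHR through two cheap comparisons: a factor $1/\tau$ relating multi-step to single-step flow, and a factor $\Theta(\sigma/R)$ from a pointwise Radon--Nikodym bound between the one-step Gaussian and CHR kernels (\cref{lem:4.4,lem:5.3}). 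If you wish to pursue your route instead, the closest realized version is the concurrent Laddha--Vempala argument, which proves a new isoperimetric inequality adapted to axis-aligned moves via cube tilings; either way, some ingredient beyond the classical Euclidean Lov\'asz--Simonovits inequality is required, and your proposal currently defers it to an unproven claim.
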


It follows that starting from a $M$-warm start, the mixing time
$T_{CHR}(\epsilon)$ of the coordinate Hit-and-Run random walk on the body $K$ is
bounded by a polynomial in $n, M, R$ and $1/\epsilon$.  The precise form of the
polynomial is given in \Cref{thm:main}. Noting that the unit cube in $n$
dimensions is contained in a Euclidean ball of diameter $\sqrt{n}$, this result
also implies a similar result for convex sets sandwiched using Euclidean balls,
with a slightly worse dependence on $n$ (i.e. $n^9$).

Concurrent and independent work by Laddha and Vempala~\cite{LV20} has a main
result that is similar to ours: they also give a polynomial mixing time from a
warm start for the CHR chain. Their results hinge upon directly proving a novel
isoperimetric inequality with respect to axis aligned moves for convex sets that
are cubes, and then extending this to arbitrary convex sets by tiling them using
cubes.  The bounds provided by Laddha and Vempala are in terms of sandwiching
ratios $S$ involving Euclidean balls, and have the form $\tilde{O}(n^9 S^2)$
rather than $\tilde{O}(n^7R^4)$ of this paper.  Depending on the body, one or
the other bound can be better - it is possible to construct ``tube-like'' bodies
for which $\tilde{O}(n^7R^4)$ is larger, and cube-like bodies with a small
sandwiching ratio for which $\tilde{O}(n^9 S^2)$ is larger.

There exists an affine transformation that can make the sandwiching ratio using
Euclidean balls as small as $O(n)$ (such as rotating to the isotropic position),
and even $\tilde{O}(\sqrt{n})$ if one is willing to discard an insignificant
amount of mass (see the discussion on page 410 of Lov\'{a}sz and Vempala
\cite{LoV06}).  However, we believe that the situations where
coordinate hit-and-run would be most useful are those where there is an
underlying set of basis vector directions that has a statistically relevant
interpretation.  Coordinate hit-and-run also readily lends itself to a random
walk on the lattice points of a convex set. However, rounding the body using
non-diagonal linear transformations would not preserve the axis
directions. Similarly, non-unnimodular linear transformations do not preserve
the lattice. For these reasons, we do not include a detailed discussion of
pre-processing the body through such affine transformations.

\subsection{Technical overview and discussion}
\label{sec:technical-overview}
In this section, we fix a convex body $K \in \R^n$, and assume that $R \geq 1$
is such that $B_\infty \leq K \leq R\cdot B_\infty$.  One of the main
ingredients of Lovász's proof of the fast mixing of the Hit-and-Run
walk~\cite{L99} is the following ``overlap property'' of the Hit-and-Run walk.
For $v \in K$, let $\hit_v$ denote the probability distribution of the point
sampled by executing one step of the Hit-and-Run walk on $K$ started from $v$.
Lovász~\cite{L99} showed that when $u, v \in K$ are ``close'', the probability
distributions $H_u$ and $H_v$ have total variation distance bounded away from
$1$.  The ``closeness'' condition was defined in~\cite{L99} in terms of a
combination of the Euclidean distance and the ``cross-ratio distance'' (derived
from the Hilbert metric on $K)$.  The second ingredient is an appropriate
isoperimetric inequality, which in the case of~\cite{L99} was a new
isoperimetric inequality in terms of the Hilbert metric.  These two ingredients
were then combined in~\cite{L99} to obtain lower bounds on the $s$-conductance
(see~\cref{eq:s-cond}) of the Hit-and-Run random walk for all $0 < s < 1/2$,
from which the mixing time bound from a warm start follows via a result of
Lovász and Simonovits~\cite{LS93} (see \cref{thm:mixing}).

Unfortunately, the overlap property fails for the coordinate Hit-and-Run
chain. For, denoting by $\C_v$ the probability distribution of the point sampled
by executing one step of the coordinate Hit-and-Run walk on $K$ started from
$v$, it is easy to see that whenever $x, y \in K$ are such that $x - y$ has a
non-zero component along each of the coordinate directions, $\C_x$ and $\C_y$
have disjoint supports, so that $d_{TV}(\C_x, \C_y) = 1$.  In order to
circumvent this difficulty, we consider in \cref{sec:gaussian-chain} a
multi-step variant of the coordinate Hit-and-Run chain, which we call the
Gaussian chain. We then establish an analogue of the overlap property (see
\cref{def:overlap} for a precise definition of the property in our setting) for
this chain.  The method for passing from this version of the overlap property to
a lower bound on $s$-conductance, via an isoperimetric inequality, is similar to
the one used by Lovász~\cite{L99}, and is outlined in
\cref{sec:overlaps-conductance}.  An isoperimetric inequality due to Lovász and
Simonovits~\cite{LS93} (see \cref{thm:isoperimetry}) is then combined with the
results in \cref{sec:overlaps-conductance}, and with a comparison argument
between the Gaussian chain and the actual coordinate Hit-and-Run chain, to give
a lower bound on the $s$-conductance of the latter (see
\cref{sec:s-cond-coord}).  The mixing time result follows immediately from this
lower bound on the $s$-conductance using \Cref{thm:mixing} due to Lovász and
Simonovits~\cite{LS93}; the details are provided in \cref{sec:final}.

The isoperimetric inequality with respect to the cross-ratio distance used
in~\cite{L99} was an important ingredient in the optimized run-time bounds
obtained therein for the Hit-and-Run walk.  An extension of this inequality was
later derived by Lovász and Vempala~\cite{lovasz2006hit}, who used it to prove
fast mixing for the Hit-and-Run walk while substantially relaxing the
requirement of ``warm start'' used in~\cite{L99}. Unfortunately, the coordinate
Hit-and-Run walk does not appear to share with the Hit-and-Run walk the nice
geometric properties that enabled the use of these improved isoperimetric
inequalities.  Optimizing the mixing time bounds obtained for coordinate
Hit-and-Run in this paper, and relaxing the ``warm start'' condition, both
therefore remain interesting open problems.

\section{Preliminaries}
\label{sec:preliminaries}

\paragraph{Notation} We will denote positive universal constants that are less
than $1$ by $c$ and positive universal constants that are greater than $1$ by
$C$.

\subsection{The coordinate Hit-and-Run walk}
Let $K\subseteq \R^n$ be a convex
body. Let $e_1, \dots, e_n$ denote the canonical basis of $\R^n$.
The law of the coordinate Hit-and-Run (CHR) walk can then be expressed as
follows.  Fix $v$ belonging to the interior of $K$.  To implement one step of the
CHR walk, do the following.  First, choose $i \in [n]$ uniformly at random, and
let $\ell$ be the cord $K \cap (v + e_i \R).$ Then, choose $u$ uniformly at
random from $\ell$ (with respect to the $1-$dimensional Lebesgue measure
$\lambda_1$ on $\R$).  Thus, the transition probability densities (with respect
to $\la_1$) are given by
\begin{equation*}
  \PP_{vu} =
  \begin{cases}
    \frac{1}{n|K \cap (v + e_i \R)|} & \text{$u \in K \cap (v + e_i \R)$ for some $i \in [n]$,}\\
    0 & \text{ otherwise.}
  \end{cases}
\end{equation*}

\subsection{Markov schemes and isoperimetry}
In this section we collect some definitions and results about Markov chains that
will be used in our proofs.  We mostly follow the terminology and notation of
Lovász and Simonovits~\cite{LS93}.

For any measurable $S \subseteq \R^n$, we denote by $\vol{S}$ its Lebesgue
volume, and by $\pi_S$ the uniform probability distribution on $S$.  Thus,
$\pi_S(A) = \nfrac{\vol{S \cap A}}{\vol{S}}$ for any measurable $A$.  Fix a
measurable set $S \subseteq \R^n$ of finite volume and let $\mathcal{S}$ denote
the set of measurable subsets of $S$. A \emph{Markov scheme} on $S$ is a map
$P: S \times \mathcal{S} \rightarrow [0,1]$ such that (i) for every $u \in S$,
$P(u, \cdot)$ is a probability measure on $S$, and (ii) for every
$A \in \mathcal{S}$, the function $P(\cdot, A) : S \rightarrow [0,1]$ is
measurable.  Given a Markov scheme $P$ and a probability distribution $\mu_0$ on
$S$, A \emph{Markov chain} $M = M(\mu, P)$ is a sequence
$X_0, X_1, \dots, X_k,\dots$ of random points in $S$ such that $X_0 \sim \mu_0$
and for $i \geq 1$, $X_i \sim P(X_{i-1}, \cdot)$.  We denote by $\mu_{k, P}$ the
probability distribution of the random variable $X_k$.

A probability measure $Q$ on $S$ is said to be \emph{stationary} with respect to
a Markov scheme $P$ on $S$ if for all measurable $A\subseteq S$,
\begin{displaymath}
  \int_{S}P(u, A)\; Q(du) = Q(A).
\end{displaymath}
Such a $Q$ is said to be \emph{reversible} with respect to $P$ if for all
measurable $A, B \subseteq S$,
\begin{displaymath}
  \int_{A}P(u, B) \; Q(du) = \int_{B}P(u, A)\;Q(du).
\end{displaymath}
Note that if $Q$ is reversible with respect to $P$, then it is also stationary
with respect to $Q$.

The \emph{ergodic flow} $\Phi_{P,Q}$ of a Markov scheme $P$ on $S$ with respect
to a probability distribution $Q$ on $S$ is defined as
\begin{displaymath}
  \Phi_{P,Q}(A) \defeq \int_A P(u, S - A) Q(du),\text{ for all measurable $A \subseteq S$.}
\end{displaymath}
Note that when $Q$ is stationary with respect to $P$, one can easily check
that~\cite[p.~366]{LS93}
\begin{equation}
  \label{eq:symmetry-of-flow}
  \Phi_{P, Q}(A) = \Phi_{P, Q}(S - A).
\end{equation}
By abuse of notation, we extend the definition of ergodic flow to pairs of measurable sets:
\begin{displaymath}
  \Phi_{P,Q}(A, B) \defeq \int_A P(u, B) Q(du),\text{ for all measurable $A, B \subseteq S$.}
\end{displaymath}
The \emph{$s$-conductance}~\cite[p.~367]{LS93} $\Phi_s$, where $0 \leq s < 1/2$,
of a Markov scheme $P$ on $S$ with respect to a probability distribution $Q$ on
$S$ is defined as
\begin{equation}
  \Phi_s \defeq \inf_{A \st s < Q(A) \leq 1/2} \frac{\Phi_{P,Q}(A)}{Q(A) - s}.\label{eq:s-cond}
\end{equation}
Quantitative estimates involving the conductance of Markov chains have been an
important ingredient of various results proving mixing time upper bounds for Markov
chains.  In our proof, we will use the following estimate, stated in this
form by Lovász and Simonovits~\cite{LS93}.

\begin{theorem}[{\cite[Corollary 1.6]{LS93}}]\label{thm:mixing}
  Let $S$ be a measurable subset of $\R^n$ of finite volume. Let $P$ be a Markov
  scheme on $S$, and let $\mu_0$ and $Q$ be probability distributions on $S$
  such that $Q$ is stationary with respect to $P$. Assume that $Q$ is atom-free,
  i.e., $Q(\inb{u}) = 0$ for all $u \in S$. Fix $0 < s < 1/2$, and define
  \begin{displaymath}
    H_s \defeq \sup_{A \subseteq S \st  Q(A) \leq s} \abs{\mu_0(A) - Q(A)}.
  \end{displaymath}
  For each $k \geq 0$, let $\mu_{k, P}$ denote the distribution of the $k$th
  element in the Markov chain $(\mu_0, P)$.  Then, for all non-negative integers
  $k$,
  \begin{displaymath}
    d_{TV}(\mu_{k, P}, Q) \leq \inp{ 1 + \frac{(1-\Phi_s^2/2)^k}{s}}\cdot
    H_s,
  \end{displaymath}
  where $\Phi_s$ is the $s$-conductance of $P$ with respect to $Q$.
\end{theorem}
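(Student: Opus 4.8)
The plan is to recall the ``averaging curve'' technique of Lov\'asz and Simonovits that underlies this estimate. For a probability measure $\sigma$ on $S$, introduce the function $h_\sigma:[0,1]\to[0,1]$,
\[
  h_\sigma(x)\defeq\sup\bigl\{\sigma(A)\st A\subseteq S\text{ measurable},\ Q(A)=x\bigr\},
\]
which is well defined because $Q$ is atom-free. Relaxing set indicators to $[0,1]$-valued functions (this does not change the supremum, again using atom-freeness) exhibits $h_\sigma$ as a supremum of linear functionals over a family of convex sets parametrised linearly by $x$, so $h_\sigma$ is concave; it is also nondecreasing with $h_\sigma(1)=1$, hence $h_\sigma(x)\ge x$ (a concave function on $[0,1]$ lies above the chord from $(0,h_\sigma(0))$ to $(1,1)$). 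Moreover $d_{TV}(\sigma,Q)=\sup_A(\sigma(A)-Q(A))=\sup_{x\in[0,1]}(h_\sigma(x)-x)$. So, writing $h_k\defeq h_{\mu_{k,P}}$ and $\rho\defeq 1-\Phi_s^2/2$ (we may assume $\Phi_s<1$, as otherwise the asserted bound is vacuous, so that $\rho\in(\tfrac12,1]$), it suffices to establish the ``curve bound''
\[
  h_k(x)\ \le\ x+H_s+\frac{\rho^{k}}{s}\,H_s\,\chi(x)\qquad\text{for all }x\in[0,1],
\]
where $\chi$ is the concave profile equal to $\frac{2}{1-2s}\sqrt{(x-s)(1-s-x)}$ on $[s,1-s]$ and to $0$ outside; note $0\le\chi\le1$, $\chi(\tfrac12)=1$, and $\min(x-s,\,1-s-x)_+\le\chi(x)$ everywhere. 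Taking $\sup_x$ and using $\sup\chi=1$ then gives exactly $d_{TV}(\mu_{k,P},Q)\le H_s(1+\rho^k/s)$.

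Next I would prove the curve bound by induction on $k$. For $k=0$, the hypothesis $|\mu_0(A)-Q(A)|\le H_s$ whenever $Q(A)\le s$ gives $h_0(x)\le x+H_s$ for $x\notin(s,1-s)$ (using complementation near $1$), and concavity of $h_0$ together with $h_0(0)\ge0$, $h_0(s)\le s+H_s$ and their mirror images at $1$ extends this to $h_0(x)-x\le H_s+\frac{H_s}{s}\min(x-s,\,1-s-x)\le H_s+\frac{H_s}{s}\chi(x)$ on $(s,1-s)$; in all cases $h_0(x)\le x+H_s+\frac{H_s}{s}\chi(x)$, the $k=0$ case. The inductive step rests on the one-step estimate that is the technical core of \cite{LS93}, which in a form convenient here reads: for every $x\in(s,1-s)$,
\[
  h_{k+1}(x)\ \le\ \tfrac12\,h_k\bigl(x-t_x\bigr)+\tfrac12\,h_k\bigl(x+t_x\bigr),\qquad t_x\defeq 2\Phi_s\bigl(\min(x,1-x)-s\bigr)
\]
(arguments clipped to $[0,1]$), together with the monotonicity $h_{k+1}(x)\le h_k(x)$ for all $x$, which is immediate from stationarity of $Q$ and the relaxed form of $h_\sigma$ (with $g=P(\cdot,A)\in[0,1]$ one has $\mu_{k+1}(A)=\int g\,d\mu_{k,P}\le h_k\bigl(\int g\,dQ\bigr)=h_k(Q(A))$). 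Granting these, the induction closes: for $x\notin(s,1-s)$, $h_{k+1}(x)\le h_k(x)\le x+H_s$; for $x\in(s,1-s)$, apply the one-step estimate and the inductive hypothesis at $x\pm t_x$ — the ``$x+H_s$'' part is reproduced verbatim, and the $\chi$-part needs $\tfrac12\bigl(\chi(x-t_x)+\chi(x+t_x)\bigr)\le\rho\,\chi(x)$; when a shifted argument leaves $[s,1-s]$ its $\chi$-value is $0$ and this collapses to $\tfrac12\chi(\cdot)\le\rho$ (true as $\chi\le1$, $\rho\ge\tfrac12$), and when both stay inside it is checked directly from the shape of $\chi$: writing $\chi(x)=\sqrt{1-\bigl((x-\tfrac12)/(\tfrac12-s)\bigr)^2}$ and using $\sqrt{1+a}+\sqrt{1-a}\le2\sqrt{1-a^2/4}$ with $a$ a suitable multiple of $2\Phi_s$, the averaged value drops by at least the factor $\sqrt{1-4\Phi_s^2}\le1-\Phi_s^2/2=\rho$ (this is where the constant $2$ in $t_x$, and ultimately the precise rate $1-\Phi_s^2/2$, enters).

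The hard part is the one-step estimate itself. The idea: let $A$ be (almost) optimal for $h_{k+1}(x)$, so $Q(A)=x$ and $\mu_{k+1}(A)=\int_S P(u,A)\,\mu_{k,P}(du)$. The $s$-conductance lower bound, combined with the flow-symmetry $\Phi_{P,Q}(A)=\Phi_{P,Q}(S-A)$ (valid since $Q$ is stationary), forces $\Phi_{P,Q}(A)=\int_A P(u,S-A)\,dQ\ge\Phi_s(Q(A)-s)$, i.e.\ a definite amount of ergodic flow must leave $A$. One then deletes from $A$ an ``inner layer'' consisting of the points of $A$ at which $P(\cdot,S-A)$ is largest, and adjoins to $A$ an ``outer layer'' consisting of the points of $S-A$ at which $P(\cdot,A)$ is largest, each layer having $Q$-measure comparable to $\Phi_s(Q(A)-s)$; bounding $P(\cdot,A)$ by a convex combination of the indicators of these two modified sets and integrating against $\mu_{k,P}$ yields $\mu_{k+1}(A)\le\tfrac12\mu_{k,P}(A^{-})+\tfrac12\mu_{k,P}(A^{+})$ with $Q(A^{\mp})=x\mp t_x$, hence $\le\tfrac12 h_k(x-t_x)+\tfrac12 h_k(x+t_x)$. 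Making the layer construction precise — so that the weight $\tfrac12$ and the shift $2\Phi_s(\min(x,1-x)-s)$ come out as stated, and handling $Q(A)>\tfrac12$ through the flow-symmetry — is the delicate point, and is where atom-freeness of $Q$ is essential; it is carried out in detail in \cite{LS93}. Everything else above is elementary manipulation with concave functions of one variable.
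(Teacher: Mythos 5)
You should first note that the paper does not prove this statement at all: it is quoted verbatim (as \cref{thm:mixing}) from Lov\'asz--Simonovits \cite[Corollary 1.6]{LS93} and used as a black box, so the only question is whether your reconstruction stands on its own. The elementary scaffolding you describe is essentially the genuine LS93 curve argument and is fine: concavity of $h_\sigma$ via relaxation to $[0,1]$-valued functions and atom-freeness, monotonicity $h_{k+1}\le h_k$ from stationarity, the base case, and the contraction of your ellipse profile $\chi$ (your intermediate claim of a uniform factor $\sqrt{1-4\Phi_s^2}$ is not correct, but the inequality you actually need, $\frac12\bigl(\chi(x-t_x)+\chi(x+t_x)\bigr)\le(1-\Phi_s^2/2)\chi(x)$, does hold, as does your reduction of the theorem to the curve bound, modulo the loose dismissal of the case $\Phi_s\ge 1$).

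The genuine gap is the one-step estimate, which you yourself call the hard part and then defer to \cite{LS93} (``it is carried out in detail in [LS93]''): that lemma \emph{is} the technical content of the theorem, so the proposal reduces the statement to itself rather than proving it. Moreover, the heuristic you offer for it does not work as stated: $P(\cdot,A)$ cannot be pointwise dominated by a convex combination of the indicators of an inner-shrunk set $A^-$ and an outer-grown set $A^+$ (on $A\setminus A^-$ the kernel may put nearly all its mass back into $A$); the actual mechanism is the decomposition $P(\cdot,A)=\frac12\min\inp{2P(\cdot,A),1}+\frac12\inp{2P(\cdot,A)-1}_+$ followed by the relaxed form of $h_k$, stationarity, and flow symmetry. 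More importantly, that step requires a laziness-type hypothesis, $P(u,\inb{u})\ge\frac12$, which appears neither in the statement as quoted here nor in your proposal, and without which both your one-step inequality and the asserted bound are false. Concretely, take $S=[0,2]$, $Q$ uniform, and let $P(u,\cdot)$ be uniform on $[1,2]$ for $u\in[0,1)$ and uniform on $[0,1)$ for $u\in[1,2]$: this scheme is reversible with atom-free stationary distribution, has $\Phi_s\ge\frac12$ for every $s$, yet starting from $\mu_0$ uniform on $[0,1)$ (a $2$-warm start, so $H_s\le 2s$) the distribution alternates between the two halves and $d_{TV}(\mu_{k,P},Q)=\frac12$ for all $k$, contradicting the claimed bound for small $s$ and large $k$. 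So your induction engine cannot be established under the stated hypotheses; to complete the argument you must add laziness (as in the actual LS93 setting) and then prove the one-step lemma via the decomposition above, where laziness gives $\int\inp{2P(u,A)-1}_+\,Q(du)=Q(A)-2\Phi_{P,Q}(A)\le x-t_x$ and concavity of $h_k$ finishes the step.
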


We will also need the following isoperimetric inequality.
\begin{theorem}[{\cite[Corollary 2.7]{LS93}}]
  \label{thm:isoperimetry}
  Let $\delta > 0$ be arbitrary. Fix any norm $\norm[\ell]{\cdot}$ on $\R^n$.
  Let $K \subseteq \R^n$ be a convex body and let $K_1$ and $K_2$ be
  disjoint measurable subsets of $K$ such that for all $u \in K_1$ and
  $v \in K_2$, one has $\norm[\ell]{u - v} \geq \delta$.  Suppose that the
  diameter of $K$ in the $\norm[\ell]{\cdot}$-norm is at most $D$ (i.e.,
  $\norm[\ell]{x - y} \leq D$ for all $x, y \in K$).  Then,
  \begin{displaymath}
    \vol{K - \inp{K_1 \cup K_2}} \geq \frac{2\delta}{D - \delta}
    \cdot \min\inb{\vol{K_1}, \vol{K_2}}.
  \end{displaymath}
\end{theorem}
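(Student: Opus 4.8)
The plan is to use the Lov\'asz--Simonovits \emph{localization lemma} to reduce this $n$-dimensional isoperimetric inequality to a one-dimensional inequality along a ``needle'' carrying a log-concave weight. Write $K_3 \defeq K \setminus (K_1 \cup K_2)$ and $\alpha \defeq \frac{2\delta}{D-\delta}$, so that the claim is $\vol{K_3} \ge \alpha \min\{\vol{K_1}, \vol{K_2}\}$. Arguing by contradiction, suppose $\vol{K_3} < \alpha\min\{\vol{K_1},\vol{K_2}\}$; then $\int_{\R^n} g > 0$ and $\int_{\R^n} h > 0$ for $g \defeq \mathbf{1}_{K_1} - \frac1\alpha \mathbf{1}_{K_3}$ and $h \defeq \mathbf{1}_{K_2} - \frac1\alpha\mathbf{1}_{K_3}$ (here one invokes the form of the localization lemma valid for integrable functions, or first replaces $K_1, K_2, K_3$ by suitable open/compact approximants).

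Applying the localization lemma to $g$ and $h$ yields points $a, b \in \R^n$ and a non-negative affine function $\ell$ on $[0,1]$ such that, with $p(t) \defeq (1-t)a+tb$, both $\int_0^1 \ell(t)^{n-1} g(p(t))\,dt > 0$ and $\int_0^1 \ell(t)^{n-1} h(p(t))\,dt > 0$. Since $g$ and $h$ vanish outside the convex set $K$, I may discard the portion of the segment $[a,b]$ lying outside $K$, reparametrize the chord $[a,b]\cap K$ by $\norm[\ell]{\cdot}$-arclength as $[0,L]$, and note that $L$ is at most the $\norm[\ell]{\cdot}$-diameter of $K$, hence $L \le D$. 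Carrying over the weight $w(t) \defeq \ell(t)^{n-1}$ --- a non-negative power of a non-negative affine function, hence log-concave on $[0,L]$ --- and setting $J_i \defeq \{t \in [0,L] : p(t) \in K_i\}$, the sets $J_1, J_2, J_3$ partition $[0,L]$ (as $[0,L]$ maps into $K = K_1 \sqcup K_2 \sqcup K_3$), the separation hypothesis gives $|t - s| = \norm[\ell]{p(t)-p(s)} \ge \delta$ whenever $t \in J_1$ and $s \in J_2$, and the two needle inequalities become $w(J_3) < \alpha\, w(J_1)$ and $w(J_3) < \alpha\, w(J_2)$, where $w(A) \defeq \int_A w\,dt$. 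Thus it suffices to prove the one-dimensional statement: for any log-concave $w \ge 0$ on an interval of length $L \le D$ and any partition $J_1 \sqcup J_2 \sqcup J_3$ of it with $d(J_1, J_2) \ge \delta$, one has $w(J_3) \ge \alpha \min\{w(J_1), w(J_2)\}$.

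For the one-dimensional statement I would reduce, by a rearrangement argument exploiting that $J_1$ and $J_2$ are $\delta$-separated subsets of a line, to the ``ordered'' configuration in which $J_3$ contains a single interval of length $\delta$ with $J_1$ entirely to its left and $J_2$ entirely to its right: absorbing pieces of $J_3$ lying outside this separating interval into $J_1$ or $J_2$ and pushing the masses of $J_1$ and $J_2$ toward the endpoints does not increase the ratio $w(J_3)/\min\{w(J_1), w(J_2)\}$, with the log-concavity of $w$ controlling how the three sub-integrals respond. In the reduced configuration $J_1 = [0,c]$, $J_3 = [c, c+\delta]$, $J_2 = [c+\delta, L]$ with $L \le D$, an elementary estimate --- whose extremal case is $w$ constant, $L = D$ and $c = \frac{D-\delta}{2}$, so that $w(J_1) = w(J_2)$ and $w(J_3) = \frac{2\delta}{D-\delta}\,w(J_1)$ --- yields $w(J_3) \ge \frac{2\delta}{D-\delta}\min\{w(J_1), w(J_2)\}$, contradicting the needle inequalities and completing the proof. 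I expect this last, purely one-dimensional step --- the rearrangement reduction to the ordered case together with the bookkeeping for the log-concave weight --- to be the main obstacle; the localization reduction itself is by now routine.
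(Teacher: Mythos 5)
This statement is not proved in the paper at all: it is imported verbatim as \cite[Corollary~2.7]{LS93}, so the only meaningful comparison is with the proof in that reference. Your outline --- run the Lov\'asz--Simonovits localization lemma on $g=\mathbf{1}_{K_1}-\frac1\alpha\mathbf{1}_{K_3}$ and $h=\mathbf{1}_{K_2}-\frac1\alpha\mathbf{1}_{K_3}$, pass to a needle with weight $\ell(t)^{n-1}$, reparametrize the chord by $\norm[\ell]{\cdot}$-arclength so that the separation and diameter hypotheses transfer, and finish with a one-dimensional inequality for log-concave weights --- is exactly the route taken there, and the reductions you describe (including the semicontinuity/approximation caveat for the localization lemma and the observation that $L\le D$ and that the needle weight is log-concave) are sound.

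The genuine gap is that the one-dimensional statement, which you correctly identify as the heart of the matter, is asserted rather than proved, and the sketch you give for it has a soft spot. First, on the needle the sets $J_1$ and $J_2$ need not sit on opposite sides of a single separating interval: $\delta$-separated measurable sets on a line can interleave (e.g.\ $J_1$ occupying both ends and $J_2$ the middle), so the ``ordered configuration'' $J_1=[0,c]$, $J_3\supseteq[c,c+\delta]$, $J_2=[c+\delta,L]$ is not obtained by simply ``absorbing pieces of $J_3$'' --- absorbing destroys the $\delta$-separation, and after enlarging $J_1,J_2$ to the two half-needles one only controls $\min\{w(J_1),w(J_2)\}$ by $\min$ of the half-needle masses when neither original set is split across the gap. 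A correct reduction takes a closest pair $u\in\overline{J_1}$, $v\in\overline{J_2}$ (so the open interval between them has length $\ge\delta$ and lies in $J_3$) and then needs an additional comparison or induction to handle the split case. Second, the ``elementary estimate'' in the ordered case --- that for every log-concave $w$ on an interval of length $\le D$ one has $w([c,c+\delta])\ge\frac{2\delta}{D-\delta}\min\{w([0,c]),w([c+\delta,L])\}$ --- is true (it is precisely the one-dimensional lemma proved in \cite{LS93}), but exhibiting the extremal configuration (constant $w$, balanced $c$) is not a proof; one must actually use log-concavity, e.g.\ by reducing to exponential weights or by the argument in \cite{LS93}. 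So nothing in your plan would fail, but as written the proof of the theorem is essentially deferred to the very lemma that constitutes it.
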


\subsection{Volume of the robust interior}
\label{sec:volume-robust-inter}

Given a closed convex body $K$, we define, for $r > 0$,
\begin{displaymath}
  K_r \defeq \inb{x \in K \st \forall v, \norm[\infty]{v} \leq r \implies x + v
    \in K}.
\end{displaymath}
The following proposition follows directly from \cite[Proposition 2]{DFK91}.

\begin{proposition}
  \label{prop:distance-l2}
  Suppose that $\epsilon \in (0, 1/\sqrt{n})$ and $K \subset \R^n$ is a closed
  convex body such that $B = B_2 \subseteq K$.  Then
  $(1-\epsilon\sqrt{n})K \subseteq K_{\epsilon}$.  In particular,
  $\vol{K_\epsilon} \geq (1-\epsilon\sqrt{n})^n\vol{K}$.
\end{proposition}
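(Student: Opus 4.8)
The plan is to reduce the statement to the corresponding fact in \cite{DFK91}, whose Proposition 2 addresses exactly the question of how much one must shrink a convex body $K$ (about a suitable center) so that the shrunk copy lies in the ``robust interior'' consisting of points that remain in $K$ after any small perturbation. Concretely, I would first fix the center of the dilation to be the origin, which is legitimate since $0 \in B_2 \subseteq K$, and recall that the hypothesis $B_2 \subseteq K$ means $K$ contains a Euclidean ball of radius $1$ around $0$. The core geometric claim is then: if $x \in (1 - \epsilon\sqrt{n})K$, then for every $v$ with $\norm[\infty]{v} \le \epsilon$ we have $x + v \in K$; this is precisely $(1-\epsilon\sqrt n)K \subseteq K_\epsilon$ by the definition of $K_\epsilon$ given just above the statement.

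For the geometric claim itself, I would argue as follows. Write $x = (1 - \epsilon\sqrt n)\, y$ for some $y \in K$, and let $v$ satisfy $\norm[\infty]{v} \le \epsilon$. Then $\norm[2]{v} \le \sqrt n\, \norm[\infty]{v} \le \epsilon\sqrt n$, so $\tfrac{1}{\epsilon\sqrt n} v$ has Euclidean norm at most $1$ and hence lies in $B_2 \subseteq K$. Now write
\begin{displaymath}
  x + v = (1-\epsilon\sqrt n)\, y + \epsilon\sqrt n \cdot \tfrac{1}{\epsilon\sqrt n} v,
\end{displaymath}
which exhibits $x+v$ as a convex combination of the two points $y \in K$ and $\tfrac{1}{\epsilon\sqrt n}v \in K$ (the coefficients $1 - \epsilon\sqrt n$ and $\epsilon\sqrt n$ are nonnegative and sum to $1$, using $\epsilon < 1/\sqrt n$ so that $1 - \epsilon\sqrt n \ge 0$). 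By convexity of $K$, $x + v \in K$, which is what we needed. This is exactly the content of \cite[Proposition 2]{DFK91} specialized to the $\norm[\infty]{\cdot}$-perturbation set, so one may simply cite it; I have sketched the one-line convexity argument only to make the deduction self-contained.

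Finally, the volume bound follows immediately: $(1-\epsilon\sqrt n)K \subseteq K_\epsilon$ gives
\begin{displaymath}
  \vol{K_\epsilon} \ge \vol{(1-\epsilon\sqrt n)K} = (1-\epsilon\sqrt n)^n \vol{K},
\end{displaymath}
since scaling $\R^n$ by a factor $t \ge 0$ scales volume by $t^n$. There is essentially no obstacle here; the only point requiring the hypothesis $\epsilon \in (0, 1/\sqrt n)$ is to ensure the dilation factor $1 - \epsilon\sqrt n$ is nonnegative (so that the convex-combination argument is valid and the volume bound is not vacuous), and the only input beyond elementary convexity is the containment $B_2 \subseteq K$, which lets us absorb an $\norm[\infty]{\cdot}$-ball of radius $\epsilon$ into an $\norm[2]{\cdot}$-ball of radius $\epsilon\sqrt n \subseteq K$.
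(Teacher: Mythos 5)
Your argument is correct: writing $x + v = (1-\epsilon\sqrt n)\,y + \epsilon\sqrt n\cdot\frac{v}{\epsilon\sqrt n}$ with $y \in K$ and $\frac{v}{\epsilon\sqrt n} \in B_2 \subseteq K$ (valid since $\norm[2]{v} \leq \sqrt n\,\norm[\infty]{v} \leq \epsilon\sqrt n$ and $1-\epsilon\sqrt n \geq 0$) establishes the containment, the case $v=0$ gives $x \in K$ as required by the definition of $K_\epsilon$, and the volume bound is then immediate from scaling. The route is genuinely different from the paper's, however. The paper gives no self-contained proof of this proposition (it simply defers to \cite[Proposition 2]{DFK91}), and for the sharpened $\norm[\infty]{\cdot}$-analogue (\cref{prop:distance-linfty}) it argues by duality: given $z \notin K$, a separating hyperplane $v^Tz > 1 \geq v^Ty$ on $K$ forces $\norm[1]{v} \leq 1$ because $B_\infty \subseteq K$, whence every point of $(1-\epsilon)K$ is at $\norm[\infty]{\cdot}$-distance more than $\epsilon$ from $K^c$. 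Your convex-combination argument is more elementary—no separation theorem, no H\"older-type pairing—and in fact it would also yield \cref{prop:distance-linfty} directly (take $\frac{v}{\epsilon} \in B_\infty \subseteq K$), so nothing is lost for the present statement; what the separation argument buys in the paper is a reusable template in which the inscribed ball enters only through a dual-norm bound on the separating functional, which is the form in which such robust-interior estimates are usually stated and generalized.
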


In our setting, we will need the following proposition which improves upon
\cref{prop:distance-l2} in the case when $K$ is guaranteed to contain the
$\norm[\infty]{\cdot}$-ball $B_\infty$ instead of just the
$\norm[2]{\cdot}$-ball $B_2$.  The proof follows in an essentially identical
fashion to that of \cite[Proposition 2]{DFK91}, but we include it for
completeness.

\begin{proposition}
  \label{prop:distance-linfty}
  Suppose that $\epsilon \in (0, 1)$ and $K \subset \R^n$ is a closed convex
  body such that $B_\infty \subseteq K$.  Then
  $(1-\epsilon)K \subseteq K_{\epsilon}$.  In particular,
  $\vol{K_\epsilon} \geq (1-\epsilon)^n \vol{K}$.
\end{proposition}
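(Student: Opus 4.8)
The plan is to mimic the argument of \cite[Proposition 2]{DFK91}, specialized to the $\norm[\infty]{\cdot}$ setting where the containment $B_\infty \subseteq K$ is available. Fix $\epsilon \in (0,1)$ and let $x \in (1-\epsilon)K$, so that $x = (1-\epsilon)y$ for some $y \in K$. I want to show $x \in K_\epsilon$, i.e., that $x + v \in K$ for every $v$ with $\norm[\infty]{v} \leq \epsilon$. The key observation is that such a $v$ can be written as $v = \epsilon w$ with $w \in B_\infty \subseteq K$. Then
\begin{displaymath}
  x + v = (1-\epsilon) y + \epsilon w
\end{displaymath}
is a convex combination of the two points $y \in K$ and $w \in K$, hence lies in $K$ by convexity. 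Since $v$ was an arbitrary element of $\epsilon \cdot B_\infty$, this shows $x \in K_\epsilon$, and therefore $(1-\epsilon) K \subseteq K_\epsilon$.

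For the volume bound, note that scaling by the factor $(1-\epsilon)$ multiplies $n$-dimensional Lebesgue volume by $(1-\epsilon)^n$, so $\vol{(1-\epsilon)K} = (1-\epsilon)^n \vol{K}$. Combining this with the containment just established and the monotonicity of volume under inclusion yields
\begin{displaymath}
  \vol{K_\epsilon} \geq \vol{(1-\epsilon)K} = (1-\epsilon)^n \vol{K},
\end{displaymath}
as claimed.

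I do not anticipate a genuine obstacle here: the improvement over \cref{prop:distance-l2} comes entirely from the fact that the perturbation $v$ lives in the $\norm[\infty]{\cdot}$-ball, which is exactly the ball contained in $K$, so no $\sqrt{n}$ loss is incurred in rescaling $v$ into $K$ (whereas in the $\norm[2]{\cdot}$ version of \cite{DFK91} one controls $\norm[2]{v}$ but only $B_2 \subseteq K$, forcing the factor $1 - \epsilon\sqrt{n}$). The only point requiring a little care is making sure that every $v$ with $\norm[\infty]{v} \leq \epsilon$ really is of the form $\epsilon w$ with $w \in B_\infty$, which is immediate from the definition of $B_\infty$ and the positivity of $\epsilon$; and that the closedness of $K$ is not actually needed for the inclusion, only (implicitly) for $K_\epsilon$ to be well-behaved as stated.
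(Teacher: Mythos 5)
Your proof is correct, but it takes a genuinely different route from the paper's. The paper argues by separation/duality: for $z \notin K$ it invokes the separating hyperplane theorem to obtain a nonzero $v$ with $v^Tz > 1$ and $v^Ty \leq 1$ for all $y \in K$; the hypothesis $B_\infty \subseteq K$ then forces $\norm[1]{v} \leq 1$, and for $x \in (1-\epsilon)K$ one gets $\epsilon < v^T(z-x) \leq \norm[1]{v}\cdot\norm[\infty]{z-x} \leq \norm[\infty]{z-x}$, i.e., every point within $\norm[\infty]{\cdot}$-distance $\epsilon$ of $(1-\epsilon)K$ lies in $K$. You argue instead in the primal: $(1-\epsilon)K + \epsilon B_\infty \subseteq (1-\epsilon)K + \epsilon K \subseteq K$ by convexity, which is exactly your convex-combination step $x + v = (1-\epsilon)y + \epsilon w$ with $y, w \in K$. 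Both arguments are valid, and yours is shorter and more elementary: it avoids the separating hyperplane theorem altogether, and, as you observe, it does not use closedness of $K$, whereas the paper's strict separation of $z \notin K$ does rely on $K$ being closed. (So your parenthetical claim that you are ``mimicking'' \cite[Proposition 2]{DFK91} is slightly off—the paper's proof is the one transported verbatim from there to the $\ell_\infty/\ell_1$ dual pair; yours is a different, Minkowski-sum style argument—but this is cosmetic.) What the paper's route buys is that it makes explicit where the norm duality $v^T(z-x) \leq \norm[1]{v}\,\norm[\infty]{z-x}$ enters, mirroring the $\ell_2$ computation behind \cref{prop:distance-l2}; your route recovers that case too (rescale an $\ell_\infty$-perturbation into $B_2$ at the cost of $\sqrt{n}$), so nothing is lost. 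Your volume step, via $\vol{(1-\epsilon)K} = (1-\epsilon)^n\vol{K}$ and monotonicity, is the same as the paper's implicit one.
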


\begin{proof}
  We only need to show that $\norm[\infty]{z - x} > \epsilon$ for all
  $x \in (1-\epsilon)K$ and $z \in K^c$.  Fix any $z \in K^c$.  Now, by the
  separating hyperplane theorem, there exists a non-zero vector $v$ such that
  $v^Tz > 1$ and $v^Ty \leq 1$ for all $y\in K$.  Since $B_\infty \subseteq K$,
  the latter inequality implies that $\norm[1]{v} \leq 1$.  Further, for any
  $x \in (1-\epsilon)K$, the same inequality gives $v^T{x} \leq (1-\epsilon)$.
  But then we have
  \begin{displaymath}
    \epsilon < v^T(z - x) \leq \norm[1]{v}\cdot\norm[\infty]{z - x} \leq
    \norm[\infty]{z - x}. \qedhere
  \end{displaymath}
\end{proof}

\section{Overlaps and conductance}
\label{sec:overlaps-conductance}

The next definition and the results following it in this section are based on a
method proposed by Lovász~\cite[Section 5]{L99}.

\begin{definition}[$(\epsilon, \delta, \nu)$-overlap property]\label{def:overlap} Let $K$ be a
  convex body in $\R^n$.  A Markov scheme $P$ on $K$ is said to have the
  $(\epsilon, \delta, \nu)$-overlap property with respect to $K'$, where $K'$ is
  a convex subset of $K$, if (i)
  $\vol{K'} \geq \vol{K}(1 - \epsilon)$ and (ii) for all $u, v \in K'$
  satisfying $\norm[2]{u - v} \leq \delta$, we have
  \begin{displaymath}
    d_{TV}(P(u, \cdot), P(v, \cdot)) \leq 1 - \nu.
  \end{displaymath}
\end{definition}

\begin{lemma}\label{lem:Kr}
  Let $\epsilon, \delta, \nu \in (0, 1/2)$.  Let $K$ be a convex body in $\R^n$
  of $\norm[2]{\cdot}$-diameter at most $D \geq 2\delta$ and let $P$ be a Markov
  scheme on $K$ which is reversible with respect to the uniform distribution
  $\pi_K$ on $K$.  Suppose also that $P$ has the
  $(\epsilon, \delta, \nu)$-overlap property with respect to $K'$.  Then, the
  $\epsilon$-conductance of $P$ with respect to $\pi_K$ is at least
  $\frac{\nu\delta}{4(D-\delta)}$. In fact, if $S_1, S_2$ is any arbitrary
  partition of $K$ into disjoint measurable subsets, and we denote $K'\cap S_1$
  by $T_1$ and $K'\cap S_2$ by $T_2$, then we have
  \begin{align*}
    \Phi_{P, \pi_K}(S_1) \geq \max \inb{\Phi_{P, \pi_K}(S_1, T_2),
    \Phi_{P, \pi_K}(S_2, T_1)
    }
    & \geq
      \left(\frac{\nu\delta}{4(D-\delta)}\right)\min\inb{\pi_K(T_1),
      \pi_K(T_2)},\\
    & \geq
      \left(\frac{\nu\delta}{4(D-\delta)}\right)\min\inb{\pi_K(S_1) - \epsilon,
      \pi_K(S_2) - \epsilon}.
  \end{align*}
\end{lemma}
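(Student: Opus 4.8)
The plan is to prove the chain of inequalities in the stated order, since each is either definitional or follows from the previous one together with an isoperimetric estimate. Throughout I fix the partition $S_1, S_2$ of $K$, set $T_i = K' \cap S_i$, and write $Q = \pi_K$ for brevity. The first step is to observe that the leftmost inequality $\Phi_{P,Q}(S_1) \geq \max\{\Phi_{P,Q}(S_1,T_2), \Phi_{P,Q}(S_2,T_1)\}$ is essentially free: by definition $\Phi_{P,Q}(S_1) = \Phi_{P,Q}(S_1, S_2) = \int_{S_1} P(u, S_2)\,Q(du) \geq \int_{S_1} P(u, T_2)\,Q(du) = \Phi_{P,Q}(S_1, T_2)$ because $T_2 \subseteq S_2$; and using the symmetry of ergodic flow \eqref{eq:symmetry-of-flow} together with reversibility of $P$ with respect to $Q$, we have $\Phi_{P,Q}(S_1) = \Phi_{P,Q}(S_2) \geq \Phi_{P,Q}(S_2, T_1)$ by the same monotonicity argument. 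So it suffices to lower bound $\max\{\Phi_{P,Q}(S_1,T_2), \Phi_{P,Q}(S_2,T_1)\}$.

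The heart of the argument is the second inequality, and this is where I expect the main work to lie. The idea, following Lov\'asz, is a ``three-way split'' of $K'$. Consider the sets $A_1 = \{u \in T_1 : P(u, T_2) < \nu/2\}$ and $A_2 = \{u \in T_2 : P(u, T_1) < \nu/2\}$, and let $A_3 = K' \setminus (A_1 \cup A_2)$. I claim that for any $u \in A_1$ and $v \in A_2$ we must have $\norm[2]{u - v} > \delta$: otherwise the overlap property would give $d_{TV}(P(u,\cdot), P(v,\cdot)) \leq 1 - \nu$, but since $u \in T_1 \subseteq S_1$ and $v \in T_2 \subseteq S_2$ we can bound the total variation distance below by $d_{TV}(P(u,\cdot),P(v,\cdot)) \geq P(u, S_1) - P(v, S_1) \geq (1 - P(u,S_2)) - P(v,S_1) \geq (1 - P(u,T_2) - Q(S_2 \setminus T_2)/\dots)$—here I need to be a little careful, the cleanest route is $d_{TV}(P(u,\cdot),P(v,\cdot)) \geq |P(u,T_1) - P(v,T_1)|$ and to instead note $P(u,T_1) \geq 1 - P(u, K \setminus T_1) \geq 1 - P(u,T_2) - (1 - \vol{K'}/\vol{K})\cdot(\text{stuff})$; more simply, since $P(u,\cdot)$ and $P(v,\cdot)$ are supported on $K$, and $P(u,T_2) < \nu/2$, $P(v,T_1) < \nu/2$, the measures restricted to $T_1$ and to $T_2$ respectively are almost disjoint, giving $d_{TV} \geq 1 - \nu/2 - \nu/2 = 1 - \nu$ once one also accounts for the mass outside $K'$, which is at most $\epsilon$ in $Q$-measure but need not be small in $P(u,\cdot)$-measure—so the honest statement is $d_{TV}(P(u,\cdot),P(v,\cdot)) \geq (1 - P(u, K \setminus T_1)) \wedge \dots$. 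I will organize this so that the overlap bound $1-\nu$ contradicts a lower bound of the form $1 - \nu/2 - \nu/2$; getting the constants to line up is the fiddly part. Granting the separation claim, \cref{thm:isoperimetry} applied to the convex body $K$ (of $\norm[2]{\cdot}$-diameter at most $D$) with the disjoint sets $A_1, A_2$ at distance $\geq \delta$ gives $\vol{A_3} \geq \vol{K \setminus (A_1 \cup A_2)} \geq \frac{2\delta}{D - \delta}\min\{\vol{A_1}, \vol{A_2}\}$, hence $Q(A_3) \geq \frac{2\delta}{D-\delta}\min\{Q(A_1), Q(A_2)\}$.

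Now I split into cases according to how $Q(A_1), Q(A_2)$ compare to $\frac12 Q(T_1), \frac12 Q(T_2)$. If $Q(A_1) \leq \frac12 Q(T_1)$, then $Q(T_1 \setminus A_1) \geq \frac12 Q(T_1)$, and every $u \in T_1 \setminus A_1$ has $P(u, T_2) \geq \nu/2$ by definition of $A_1$, so $\Phi_{P,Q}(S_1, T_2) \geq \Phi_{P,Q}(T_1, T_2) \geq \int_{T_1 \setminus A_1} P(u, T_2)\,Q(du) \geq \frac{\nu}{2}\cdot\frac12 Q(T_1) = \frac{\nu}{4}Q(T_1) \geq \frac{\nu}{4}\min\{Q(T_1),Q(T_2)\}$, which is at least the claimed bound since $\frac{\nu\delta}{4(D-\delta)} \leq \frac{\nu}{4}$ (using $D \geq 2\delta$). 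The symmetric case $Q(A_2) \leq \frac12 Q(T_2)$ gives the bound on $\Phi_{P,Q}(S_2, T_1)$ the same way. The remaining case is $Q(A_1) > \frac12 Q(T_1)$ and $Q(A_2) > \frac12 Q(T_2)$; here the isoperimetric estimate gives $Q(A_3) \geq \frac{2\delta}{D-\delta}\min\{Q(A_1),Q(A_2)\} > \frac{\delta}{D-\delta}\min\{Q(T_1),Q(T_2)\}$. Since $A_3 \subseteq K'$, each $u \in A_3$ either lies in $T_1$ with $P(u,T_2) \geq \nu/2$ or lies in $T_2$ with $P(u,T_1) \geq \nu/2$; partitioning $A_3 = A_3' \cup A_3''$ accordingly, one of them, say $A_3'$, has $Q$-measure at least $\frac12 Q(A_3)$, and then $\Phi_{P,Q}(S_1,T_2) \geq \int_{A_3'} P(u,T_2)\,Q(du) \geq \frac{\nu}{2}\cdot\frac12 Q(A_3) \geq \frac{\nu}{4}\cdot\frac{\delta}{D-\delta}\min\{Q(T_1),Q(T_2)\} = \frac{\nu\delta}{4(D-\delta)}\min\{Q(T_1),Q(T_2)\}$, and symmetrically for $A_3''$ and $\Phi_{P,Q}(S_2,T_1)$; taking the max over the two sub-cases yields the second inequality in all cases.

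Finally, the third inequality is immediate: since $\vol{K'} \geq \vol{K}(1-\epsilon)$ we have $Q(K \setminus K') \leq \epsilon$, so $Q(T_i) = Q(S_i) - Q(S_i \setminus K') \geq Q(S_i) - \epsilon$, whence $\min\{Q(T_1),Q(T_2)\} \geq \min\{Q(S_1) - \epsilon, Q(S_2) - \epsilon\}$. The statement about $\epsilon$-conductance then follows by taking $S_1 = A$ to be any set with $\epsilon < Q(A) \leq 1/2$: then $\min\{Q(S_1) - \epsilon, Q(S_2)-\epsilon\} = Q(A) - \epsilon$ (as $Q(S_2) = 1 - Q(A) \geq 1/2 \geq Q(A) > \epsilon$ forces $Q(S_2) - \epsilon \geq Q(A) - \epsilon$), so $\Phi_{P,Q}(A)/(Q(A) - \epsilon) \geq \frac{\nu\delta}{4(D-\delta)}$, and taking the infimum over such $A$ gives the claim. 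The one place requiring genuine care, and the main obstacle, is the separation claim in the second paragraph: making the total-variation lower bound for $u \in A_1$, $v \in A_2$ robust to the fact that $P(u,\cdot)$ and $P(v,\cdot)$ may place mass outside $K'$ (which is $Q$-small but not necessarily $P$-small), so that the contradiction with the $(1-\nu)$ overlap bound actually goes through with the stated constants.
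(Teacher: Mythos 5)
Your overall architecture matches the paper's (exceptional subsets of $T_1,T_2$ on which the transition probability across the partition is below $\nu/2$, a separation claim feeding \cref{thm:isoperimetry}, then a two/three-way case analysis on the measures of the exceptional sets), and your handling of the first and third inequalities and of the final passage to $\epsilon$-conductance is fine. But the separation claim, which you yourself flag as ``the main obstacle,'' is a genuine gap, and with your definitions it is not merely fiddly --- it is false as a deduction. You set $A_1=\inb{u\in T_1: P(u,T_2)<\nu/2}$ and $A_2=\inb{u\in T_2: P(u,T_1)<\nu/2}$, i.e.\ you threshold the mass sent into the \emph{truncated} sets $T_1,T_2\subseteq K'$. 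For $u\in A_1$, $v\in A_2$ with $\norm[2]{u-v}\le\delta$ the best generic lower bound is of the form $d_{TV}(P(u,\cdot),P(v,\cdot))\ge \bigl(1-P(u,T_2)-P(u,K\setminus K')\bigr)-P(v,T_1)$ (or a symmetric variant), and the term $P(u,K\setminus K')$ is controlled by nothing in the hypotheses: the overlap property only says $K\setminus K'$ is small in $\pi_K$-measure, while $P(u,\cdot)$ may put most of its mass there. So no contradiction with the bound $1-\nu$ follows, and your Case~3 (where both exceptional sets are large) cannot invoke the isoperimetric inequality.

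The fix is exactly to threshold against the \emph{full} partition sets: define $S_1'=\inb{x\in T_1: P(x,S_2)<\nu/2}$ and $S_2'=\inb{x\in T_2: P(x,S_1)<\nu/2}$. Since $S_1,S_2$ partition all of $K$ and $P(x,\cdot)$ is a probability measure on $K$, one has $P(x,S_1)=1-P(x,S_2)>1-\nu/2$ for $x\in S_1'$, and then $d_{TV}(P(x,\cdot),P(y,\cdot))\ge P(x,S_1)-P(y,S_1)>1-\nu$ for $y\in S_2'$, giving the separation with no leakage issue --- the mass outside $K'$ is automatically absorbed into $S_1$ or $S_2$. The price is that the flow estimates must then be routed through reversibility: e.g.\ $\Phi_{P,\pi_K}(S_1,T_2)=\Phi_{P,\pi_K}(T_2,S_1)=\int_{T_2}P(u,S_1)\,\pi_K(du)\ge\frac{\nu}{2}\pi_K(T_2\setminus S_2')$, which is precisely where the reversibility hypothesis is used (your version only needed it for the leftmost inequality). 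Two further corrections: the isoperimetric inequality should be applied inside the convex body $K'$ (convexity of $K'$ is part of \cref{def:overlap}, and $K'$ has $\norm[2]{\cdot}$-diameter at most $D$), which yields the bound directly for $\pi_K\bigl(K'\setminus(S_1'\cup S_2')\bigr)$; applying it to $K$ as you do gives a bound on $\vol{K\setminus(A_1\cup A_2)}$, and your step $\vol{A_3}\ge\vol{K\setminus(A_1\cup A_2)}$ is an inclusion run in the wrong direction, since $A_3=K'\setminus(A_1\cup A_2)\subseteq K\setminus(A_1\cup A_2)$. With these changes the case analysis and constants go through as you outlined.
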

\begin{proof}
  The proof follows the same method as that of Lovász~\cite[Sections 5 and
  6]{L99}.  Let $S_1, S_2$ be any arbitrary partition of $K$ into disjoint
  measurable subsets, and let the convex subset $K' \subseteq K$ be as in the
  definition of the $(\epsilon, \delta, \nu)$-overlap property, with
  $T_1 \defeq S_1 \cap K'$, $T_2 \defeq S_2 \cap K'$. Note that since
  $\pi_K(K') \geq 1 - \epsilon$, one has
  \begin{equation}
    \label{eq:2}
    \begin{aligned}
      \pi_K(T_1) = \pi_K(S_1 \cap K') &\geq \pi_K(S_1) - \epsilon\text{, and}\\
      \pi_K(T_2) = \pi_K(S_2 \cap K') &\geq \pi_K(S_2) - \epsilon.
    \end{aligned}
  \end{equation}
  Now, consider the following subsets of $T_1$ and $T_2$:
  \begin{equation}
    \label{eq:3}
    \begin{aligned}
      S_1' &\defeq \inb{x \in T_1 \st P(x, S_2) < \nu/2}\text{, and}\\
      S_2' &\defeq \inb{x \in T_2 \st P(x, S_1) < \nu/2}.
    \end{aligned}
  \end{equation}
  It then follows from the definition of the overlap property that
  \begin{equation}
    x \in S_1', y \in S_2' \implies \norm[2]{x - y} > \delta,\label{eq:4}
  \end{equation}
  for if not, then for such $x$ and $y$ satisfying
  $\norm[2]{x - y} \leq \delta$, we would have
  \begin{align*}
    1 - \nu \geq d_{TV}(P(x, \cdot), P(y, \cdot)) \geq P(x, S_1) - P(y, S_1) = 1
    - P(x, S_2) - P(y, S_1) > 1 - \nu,
  \end{align*}
  which is a contradiction.  We can therefore apply the isoperimetric inequality
  of \cref{thm:isoperimetry} to the subsets $S_1'$ and $S_2'$ of the convex body
  $K'$ to get
  \begin{equation}
    \pi_K(K' - (S_1' \cup S_2'))
    \geq \frac{2\delta}{D -
      \delta}\min\inb{\pi_K(S_1'), \pi_K(S_2')}.\label{eq:7}
  \end{equation}

  Note that since $P$ is reversible with respect to $\pi_K$, we have
  $\Phi_{P, \pi_K}(S_1, T_2) = \Phi_{P, \pi_K}(T_2, S_1)$.  We thus have
  \begin{equation}
    \Phi_{P, \pi_K}(S_1, T_2)  = \int\limits_{T_2}P(u, S_1) \pi_K(du) \geq \int\limits_{T_2 - S_2'}P(u, S_1) \pi_K(du) \geq \frac{\nu}{2}\pi_K(T_2-S_2').\label{eq:5}
  \end{equation}
  Similarly, we get
  \begin{equation}
    \Phi_{P, \pi_K}(S_2, T_1)  = \int\limits_{T_1}P(u, S_2) \pi_K(du) \geq \int\limits_{T_1 - S_1'}P(u, S_2) \pi_K(du) \geq \frac{\nu}{2}\pi_K(T_1-S_1').\label{eq:6}
  \end{equation}
  Consider now the condition:
  \begin{equation}
    \pi_K(S_1') > \frac{1}{2}\pi_K(T_1)\text{ and } \pi_K(S_2') >
    \frac{1}{2}\pi_K(T_2).
    \label{eq:9}
  \end{equation}
  The rest of the proof is divided into two cases:

  \noindent\textbf{Case 1: The condition in \cref{eq:9} is false.}  In this
  case, we must have $\pi_K(S_1') \leq \frac{1}{2} \pi_K(T_1)$ or
  $\pi_K(S_2') \leq \frac{1}{2}\pi_K(T_2)$.  Using \cref{eq:5} or
  \cref{eq:6} in the respective cases, we immediately get
  \begin{equation}
    \max\inb{\Phi_{P, \pi_K}(S_1, T_2),  \Phi_{P, \pi_K}(S_2, T_1)}
    \geq \frac{\nu}{4}\min\inb{\pi_K(T_1), \pi_K(T_2)}
    \stackrel{\text{\cref{eq:2}}}{\geq}
    \frac{\nu}{4}\min\inb{\pi_K(S_1) - \epsilon, \pi_K(S_2)
      -\epsilon},\label{eq:8}
  \end{equation}
  and the required inequality follows because of the assumption
  $D \geq 2\delta$.

  \noindent\textbf{Case 2: The condition in \cref{eq:9} is true.} By adding the
  inequalities in \cref{eq:5,eq:6}, and using the fact that the sets $T_1$ and
  $T_2$ form a disjoint partition of $K'$, we have
  \begin{equation}
    \max\inb{\Phi_{P, \pi_K}(S_1, T_2),  \Phi_{P, \pi_K}(S_2, T_1)}
    \geq \frac{\nu}{4}\pi_K((T_1 - S_1') \cup
    (T_2 - S_2')) = \frac{\nu}{4}\pi_K(K' - (S_1' \cup S_2')).\label{eq:10}
  \end{equation}
  Now, using \cref{eq:9} followed by \cref{eq:2} in the isoperimetric inequality
  in \cref{eq:7}, we get
  \begin{align*}
    \pi_K(K' - (S_1' \cup S_2'))
    & \geq \frac{\delta}{D -
      \delta}\min\inb{\pi_K(T_1), \pi_K(T_2)}\\
    & \geq \frac{\delta}{D -
      \delta}\min\inb{\pi_K(S_1) - \epsilon, \pi_K(S_2) - \epsilon}.
  \end{align*}
  Substituting this into \cref{eq:10}, we again obtain the required inequality.
\end{proof}

\section{The Gaussian walk}
\label{sec:gaussian-chain}
In this section, we analyse an auxiliary Markov chain with Gaussian steps.  In
the next section, this chain will be compared with the CHR chain to establish
lower bounds on the $s$-conductance of the latter.

Let $N(0, \sigma^2)$ denote the Gaussian distribution on $\R$ with mean $0$ and
variance $\sigma^2$, and let $\gamma$ denote its density with respect to the
Lebesgue measure.  The Gaussian walk on a convex body $K$ is then
defined as follows.  For some $t \in \N$, suppose that we are given
$x_t = v \in K$. To generate $x_{t+1}$, do the following.  Choose $i \in [n]$
uniformly at random, and $\kappa \sim N(0, \sigma^2)$. Let $u = v + \kappa e_i$.
If $u \in K$, accept the proposed transition and set $x_{t+1} = u$. Else, the
proposal is rejected and $x_{t+1} = v$.

Thus, the transition probability densities $\G_{vu}$ (with respect to the
one-dimensional Lebesgue measure) are given by
\begin{equation*}
  \G_{vu} =
  \begin{cases}
    \frac{\gamma(|u - v|)}{n} & \text{ if $u \neq v$ and $u \in K \cap (v + e_i \R)$ for some $i \in [n]$,}\\
    0 & \text{ otherwise.}
  \end{cases}
\end{equation*}
In addition, the rejection probability, i.e., the probability that
$x_{t+1} = v$, conditional on $x_t = v$, is given by
$$ \p[x_{t+1} = v|x_t = v] = 1- \p[v + \kappa e_i \in K|x_t = v].$$
Note that like the coordinate Hit-and-Run chain, the Gaussian chain is also
reversible with respect to the uniform probability measure $\pi_K$ on $K$.

Given a positive integer $\tau$, let $\G_v^{(\tau)}$ denote the probability
distribution of $x_\tau$ given that $x_0 = v,$ and where, for
$0 \leq t \leq \tau - 1$, $x_{t+1}$ is generated from $x_t$ according to the
Gaussian chain $\G_{x_t}$.

We now set up some notation. Given a multi-index
$\I=(i_1, \dots, i_n) \in \N^n$, we define $\G_{v,\I}$ to be the unique Gaussian
distribution centered at $v \in \R^n$, whose covariance matrix $\Sigma_\I$ with
respect to the canonical basis is the diagonal matrix with $i_j \sigma^2$ as the
$j^{th}$ diagonal entry. We define $\MM_{n,\tau}$ to be the set of all such
multi-indices $\I = (i_1, i_2, \dots, i_n) \in \N^n$ for which
$\sum_j i_j = \tau$. We say that such an $\I$ is of \emph{full rank} if the
covariance matrix $\Sigma_{\I}$ has full rank, \ie, if all of the $i_k$ are at
least $1$.  Further, for $\I \in \MM_{n,\tau}$, we define
$\la_\I := {\tau \choose \I}n^{-\tau},$ where ${\binom{\tau}{\I}}$ is the
multinomial coefficient $\frac{\tau!}{\prod_k i_k!}.$ Observe that
\begin{equation}
\label{eq:addsup}
\sum_{\I \in \MM_{n,\tau}} \la_\I= 1.
\end{equation}
We then have the following lemma.
\begin{lemma}
  \label{lem:1.1} Fix $\sigma > 0$, $n \geq 2$ and
  $\tau = \tau(n) \defeq \lceil 20 n \ln n\rceil$.  Let $K \in \R^n$ be a convex
  body and let $v \in K$ be a point such that
  $\inf_{z \in \partial K} \|v - z\|_{\infty} > 100 \sigma \ln n.$ Define the
  measures $\mathcal{G}_v^{\tau}$ and $\mathcal{G}_{v, \I}$ as above, in terms
  of $N(0, \sigma^2)$.  Then,
  \begin{displaymath}
    d_{TV}\inp{\G_v^{(\tau)}, \sum_{\substack{\I \in \MM_{n,\tau}\\\I\text{ of full rank}}} \la_\I
      \G_{v,\I}} \leq 2 n^{-5}.
  \end{displaymath}
\end{lemma}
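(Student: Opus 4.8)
The plan is to couple the Gaussian walk with its rejection-free analogue and to use the fact that $v$ lies deep inside $K$ to show that, over the first $\tau$ steps, neither walk leaves $K$ except with tiny probability; the full-rank restriction in the statement will be a negligible correction.

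First I set up the rejection-free walk. Let $i_1,\dots,i_\tau\in[n]$ be i.i.d.\ uniform and $\kappa_1,\dots,\kappa_\tau\sim N(0,\sigma^2)$ be independent, and put $\tilde x_0\defeq v$, $\tilde x_{t+1}\defeq \tilde x_t+\kappa_{t+1}e_{i_{t+1}}$ for $0\le t\le\tau-1$ --- the same recipe as the Gaussian walk but with every proposal accepted. Writing $i_j\defeq\abs{\inb{t\le\tau:i_t=j}}$, the vector $\I=(i_1,\dots,i_n)$ has law $\Pr[\I=\cdot]=\la_{\,\cdot}$ on $\MM_{n,\tau}$, and conditioned on $\I$ (a function of $(i_t)$ only, hence independent of the $\kappa_t$) the coordinates of $\tilde x_\tau-v$ are independent, the $j$-th being a sum of $i_j$ i.i.d.\ $N(0,\sigma^2)$ variables; hence $\tilde x_\tau\sim\nu'\defeq\sum_{\I\in\MM_{n,\tau}}\la_\I\,\G_{v,\I}$. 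Discarding the non--full-rank terms is cheap: $\I$ fails to be full rank exactly when some coordinate is never chosen, which by a union bound and the choice of $\tau$ has probability at most $n(1-1/n)^\tau\le n\,e^{-\tau/n}\le n^{-19}$; thus $d_{TV}(\nu',\nu)\le n^{-19}$, where $\nu$ is the measure appearing in the statement.

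Now the coupling. For each $j$ let $T^{(j)}_m$, $m\ge0$, denote the partial sums of those $\kappa_t$ with $i_t=j$; these $n$ walks are independent of the sequence $(i_t)$. At every time $t\le\tau$ the $j$-th coordinate of $\tilde x_t$ equals $v_j+T^{(j)}_m$ for some $m\le i_j$, so $\norm[\infty]{\tilde x_t-v}\le Z\defeq\max_j\max_{0\le m\le i_j}\abs{T^{(j)}_m}$. Because $\inf_{z\in\partial K}\norm[\infty]{v-z}>100\sigma\ln n$ and $v\in\mathrm{int}\,K$, the closed $\norm[\infty]{\cdot}$-ball of radius $100\sigma\ln n$ about $v$ is contained in $K$ (reaching a point outside $K$ from $v$ would force the segment to cross $\partial K$). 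Hence, on $\inb{Z<100\sigma\ln n}$, every $\tilde x_t$ lies in $K$, and a one-line induction shows that the Gaussian walk driven by the same $i_t,\kappa_t$ never rejects and coincides with $\tilde x$ step for step; therefore $d_{TV}(\G_v^{(\tau)},\nu')\le\Pr[Z\ge 100\sigma\ln n]$. To bound the latter, note $i_j\sim\mathrm{Bin}(\tau,1/n)$ with mean $\tau/n\in[20\ln n,21\ln n]$ (using the choice of $\tau$ and $n\ge2$), so a Chernoff bound gives $\Pr[i_j\ge 50\ln n]\le n^{-10}$ and $\Pr[\exists j:i_j\ge 50\ln n]\le n^{-9}$; on the complement $Z\le\max_j\max_{0\le m\le L}\abs{T^{(j)}_m}$ with $L\defeq\lceil 50\ln n\rceil\le 51\ln n$, and Doob's maximal inequality applied to the submartingales $e^{\pm\lambda T^{(j)}_m}$ (optimizing in $\lambda$) gives $\Pr[\max_{0\le m\le L}\abs{T^{(j)}_m}\ge 100\sigma\ln n]\le 2\exp\!\left(-\tfrac{(100\sigma\ln n)^2}{2L\sigma^2}\right)\le 2n^{-98}$, whence (union bound over $j$) $\Pr[Z\ge 100\sigma\ln n]\le n^{-9}+2n^{-97}\le n^{-8}$. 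Combining, $d_{TV}(\G_v^{(\tau)},\nu)\le n^{-8}+n^{-19}\le 2n^{-5}$.

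The one step requiring genuine care is the geometric bookkeeping in the coupling: one must see that \emph{every} point the rejection-free walk visits --- equivalently, every point the Gaussian walk proposes --- stays within $\norm[\infty]{\cdot}$-distance $\max_j\max_{m\le i_j}\abs{T^{(j)}_m}$ of $v$, so that the depth of $v$ inside $K$ translates directly into a ``no rejection'' event. The residual probability is then controlled by a single Chernoff estimate for the occupation counts $i_j$ (which concentrate around $\tau/n\approx 20\ln n$) and a single Doob estimate for the coordinatewise Gaussian walks, with the constants $20$ and $100$ in the hypothesis providing ample slack.
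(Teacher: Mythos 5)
Your proof is correct and takes essentially the same route as the paper's: the same coupling of the true walk with the rejection-free walk driven by identical randomness, the same identification of the rejection-free law at time $\tau$ with $\sum_{\I}\la_\I\G_{v,\I}$, the same Chernoff bound on the coordinate occupation counts, and the same $n(1-1/n)^\tau\le n^{-19}$ estimate for the non-full-rank mass. The only (cosmetic) difference is that you control the walk's maximal excursion via Doob's maximal inequality per coordinate, whereas the paper conditions on the counts and union-bounds over all $n\tau$ coordinate--time pairs using the conditional Gaussian marginals; both yield bounds far below the $2n^{-5}$ target.
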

\begin{proof}
  Let $\mathcal{H}$ denote the probability measure
  $\left(\sum_{\I \in \MM_{n,\tau}} \la_\I \G_{v,\I}\right)$, and $\mathcal{H}'$
  the measure
  $\left(\sum_{\substack{\I \in \MM_{n,\tau}\\\I\text{ of full rank}}} \la_\I
    \G_{v,\I}\right)$.  Consider the following natural coupling between
  $\G_{v}^{(\tau)}$ and $\mathcal{H}$, for $0 \leq t \leq \tau$.  Set
  $x_0 = x_0' = v$.  For $0 \leq t \leq \tau - 1$, the pair
  $(x_{t+1}, x_{t+1}')$, given the pair $(x_t, x_t')$ is generated as follows.
  Pick $i \in [n]$ u.a.r. and $\kappa \sim N(0, \sigma^2)$, and set
  $x_{t+1}' = x_{t}' + \kappa e_i$.  Further $x_{t+1} = x_t + \kappa e_i$ if
  that point is in $K$, otherwise $x_{t+1} = x_{t}$.  As before, in the latter
  case we say that the update at time $t + 1$ was \emph{rejected}.  Let $E$ be
  the event that none of the updates at times $1\leq t \leq \tau$ are rejected.

  By definition, we have $x_\tau \sim \G_v^{(\tau)}$ and
  $x_\tau' \sim \mathcal{H}$.  Further, when the event $E$ occurs, we also have
  $x_\tau = x_\tau'$.  It therefore follows from standard coupling arguments
  that
  \begin{equation}
    d_{TV}(\G_v^{(\tau)}, \mathcal{H}) \leq \Pr[\lnot E] = 1 - \Prob[E].\label{eq:12}
  \end{equation}
  We now lower bound $\Pr[E]$.  For $1 \leq i \leq n$, let $A_i$ denote the
  number of times an update along the coordinate direction $i$ is proposed in
  the above coupling.  Then, by a standard Chernoff bound (see
  \cref{fct:chernoff-bound}), and recalling that $\tau = \ceil{20 n \ln n}$, we
  have
  \begin{equation}
    \label{eq:1}
    \Prob\insq{\exists i, 1\leq i \leq n, \text{ s.t. } A_i \geq 50\ln n} \leq n
    \cdot \exp\inp{-\tau/(3n)}\leq n^{-5}.
  \end{equation}
  We now recall that $x_0 = x_0' = v$ satisfies
  $\inf_{z \in \partial K} \|v - z\|_{\infty} > 100 \sigma \ln n$. Thus,
  conditioned on the event that each $A_i$ is at most $50\ln n$, standard
  Gaussian tail bounds imply that with high probability, each of the $x_j'$,
  $0 \leq j \leq \tau$, is actually contained in $K$, so that the event $E$
  occurs.  To establish this claim formally, let $A_{i,t}$, where
  $1 \leq i \leq n$ and $1 \leq t \leq \tau$, denote the number of times, up to
  time $t$, that an update along the coordinate direction $i$ is performed in
  the above coupling (thus, $A_i = A_{i, \tau}$).  Note that conditioned on the
  $A_{i,t}$, the $j$th coordinate $x_{s,j}'$ of $x_s'$ has the distribution
  $N(v_j, A_{j,s}\sigma^2)$.  Further, if we have
  $\vert x_{s,j}' - v_j\vert \leq 100 \sigma \ln n$ for all $1 \leq s \leq \tau$
  and $1 \leq j \leq n$, then the event $E$ occurs.  Using a union bound, we
  thus have
  \begin{multline*}
    \Pr[\lnot E \vert A_{i,t}, \text{ where } 1\leq i \leq n, 1 \leq t \leq
    \tau]\\
    \begin{aligned}
      &\leq \sum_{j = 1}^n\sum_{s = 1}^\tau \Pr[\vert x_{s,j}' - v_j\vert > 100
      \sigma \ln n \vert A_{i,t}, \text{ where } 1\leq i \leq n, 1 \leq t \leq
      \tau]\\
      &= \sum_{j = 1}^n\sum_{s = 1}^\tau \Pr_{Z \sim N(v_j, A_{j,s}\sigma^2)}[\vert Z -
      v_j\vert > 100 \sigma \ln n].
    \end{aligned}
  \end{multline*}
  Now, using the fact that $A_{j,s} \leq A_j$ for all $1 \leq j \leq n$ and
  $1 \leq s \leq \tau$, and then applying \Cref{fct:gaussian-tail-bound} to
  bound each term in the sum above, we get
  \begin{equation}
    \Pr[\lnot E \vert \forall 1\leq i \leq n, A_i \leq 50\ln n]
    \leq n\cdot\tau \cdot \exp\inp{-\frac{100^2\,\sigma^2(\ln n)^2}{100 \, \sigma^2 \ln
        n}} \leq n^{-90}.\label{eq:11}
  \end{equation}
  Together, \cref{eq:12,eq:1,eq:11} imply that
  \begin{equation}
    d_{TV}(\G_v^{(\tau)}, \mathcal{H})
    \leq \Pr[\lnot E] \leq
    1.5\, n^{-5}.\label{eq:15}
  \end{equation}
  Finally, we note that
  \begin{equation}
    \label{eq:16}
    d_{TV}({\mathcal{H'}, \mathcal{H}})
    \;\;\leq \sum_{\substack{\I \in \MM_{n,\tau}\\\I\text{ not of full rank}}}
    \la_\I \;\; \leq \;\; n\cdot\inp{1- \frac{1}{n}}^\tau \leq n^{-19}.
  \end{equation}
  The claim now follows from \cref{eq:15,eq:16} and the triangle inequality for
  $d_{TV}$.
\end{proof}

\begin{lemma}\label{lem:1.2}
  Let $\sigma > 0, n \geq 2, \tau = \tau(n)$ and the convex body $K \in \R^n$ be
  as in the statement of \cref{lem:1.1}.  Let $v, u \in K$ be points such that
  \begin{itemize}
  \item $\|v - u\|_2 \leq \sigma,$
  \item $\inf_{z \in \partial K} \|v - z\|_{\infty} > 100 \sigma \ln n,$ and
  \item $\inf_{z \in \partial K} \|u - z\|_{\infty} > 100 \sigma \ln n.$
  \end{itemize}
  Define the measures $\mathcal{G}_u^{(\tau)}, \mathcal{G}_v^{(\tau)}$ and
  $\mathcal{G}_{u, \I},\mathcal{G}_{v, \I}$ as above, in terms of
  $N(0, \sigma^2)$.  Then,
  \begin{displaymath}
    d_{TV}(\G_v^{(\tau)}, \G_u^{(\tau)}) \leq
    \frac{1}{2} + 4 n^{-5} \leq 3/4.
  \end{displaymath}
\end{lemma}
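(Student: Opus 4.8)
The plan is to reduce the bound on $d_{TV}(\G_v^{(\tau)}, \G_u^{(\tau)})$ to a comparison of the mixtures of full-rank Gaussians provided by \cref{lem:1.1}, and then to control the total variation distance between the two mixtures term-by-term. First I would apply \cref{lem:1.1} to both $v$ and $u$ (the hypotheses on the $\norm[\infty]{\cdot}$-distance to $\partial K$ are exactly what is needed), obtaining
\begin{displaymath}
  d_{TV}\Big(\G_v^{(\tau)}, \sum_{\I\text{ full rank}}\la_\I \G_{v,\I}\Big) \leq 2n^{-5}, \qquad d_{TV}\Big(\G_u^{(\tau)}, \sum_{\I\text{ full rank}}\la_\I \G_{u,\I}\Big) \leq 2n^{-5}.
\end{displaymath}
By the triangle inequality for $d_{TV}$, it then suffices to bound
\begin{displaymath}
  d_{TV}\Big(\sum_{\I\text{ full rank}}\la_\I \G_{v,\I},\ \sum_{\I\text{ full rank}}\la_\I \G_{u,\I}\Big) \leq \sum_{\I\text{ full rank}}\la_\I\, d_{TV}(\G_{v,\I}, \G_{u,\I}),
\end{displaymath}
using convexity of $d_{TV}$ in each argument, and since $\sum_{\I\text{ full rank}}\la_\I \le 1$ it is enough to show each $d_{TV}(\G_{v,\I}, \G_{u,\I}) \le 1/2$ for full-rank $\I$. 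Adding the $4n^{-5}$ from the two applications of \cref{lem:1.1} gives the stated $\tfrac12 + 4n^{-5} \le 3/4$.

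The remaining task is the core estimate: for each full-rank $\I \in \MM_{n,\tau}$, the two Gaussians $\G_{v,\I}$ and $\G_{u,\I}$ have the same (invertible, diagonal) covariance $\Sigma_\I$ and differ only in their means $v$ and $u$, with $\norm[2]{v-u}\le\sigma$. For two Gaussians with common covariance $\Sigma$ and means differing by a vector $\Delta$, the total variation distance equals $\erf\!\big(\|\Sigma^{-1/2}\Delta\|_2 / (2\sqrt 2)\big)$, or can simply be bounded via Pinsker's inequality by $\tfrac12\sqrt{\tfrac12\,\Delta^{\tr}\Sigma^{-1}\Delta}$. Here $\Delta = v-u$ and $\Sigma_\I = \mathrm{diag}(i_1\sigma^2,\dots,i_n\sigma^2)$ with every $i_j \ge 1$ by full rank, so
\begin{displaymath}
  \Delta^{\tr}\Sigma_\I^{-1}\Delta = \sum_{j=1}^n \frac{(v_j-u_j)^2}{i_j\sigma^2} \le \frac{1}{\sigma^2}\sum_{j=1}^n (v_j-u_j)^2 = \frac{\norm[2]{v-u}^2}{\sigma^2} \le 1.
\end{displaymath}
Hence $d_{TV}(\G_{v,\I},\G_{u,\I}) \le \tfrac12\sqrt{1/2} < 1/2$, which is exactly what is needed. (One could alternatively invoke the explicit $\erf$ formula, which gives the sharper bound $\erf(1/(2\sqrt2)) \approx 0.2$, but Pinsker suffices.)

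The only mild subtlety — and the step I would be most careful about — is justifying the inequality $d_{TV}$ of mixtures $\le$ weighted sum of $d_{TV}$'s, i.e.\ joint convexity of total variation distance; this is standard (it follows from the triangle inequality applied along a telescoping path that changes one mixture component at a time, or directly from $d_{TV}(\mu,\nu) = \sup_A |\mu(A)-\nu(A)|$ and linearity), so it is routine rather than a real obstacle. Everything else is a direct assembly of \cref{lem:1.1} with the elementary Gaussian-distance bound; there is no genuinely hard step here, since the heavy lifting (the coupling argument showing the walk looks like a full-rank Gaussian mixture) was already done in \cref{lem:1.1}.
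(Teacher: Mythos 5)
Your proposal is correct and follows essentially the same route as the paper: apply \cref{lem:1.1} to both $u$ and $v$ (cost $4n^{-5}$), use convexity of $d_{TV}$ over the full-rank mixture with $\sum \la_\I \le 1$, and bound each pair $\G_{v,\I},\G_{u,\I}$ via Pinsker together with $\Sigma_\I^{-1} \preccurlyeq \sigma^{-2} I$. One harmless slip: for common covariance, $D_{KL} = \tfrac12 \Delta^{\tr}\Sigma_\I^{-1}\Delta$, so Pinsker gives $d_{TV} \le \tfrac12\sqrt{\Delta^{\tr}\Sigma_\I^{-1}\Delta}$ rather than your $\tfrac12\sqrt{\tfrac12\,\Delta^{\tr}\Sigma_\I^{-1}\Delta}$ (which is not a valid bound in general), but the corrected constant still yields the needed $d_{TV}(\G_{v,\I},\G_{u,\I}) \le \tfrac12$, exactly as in the paper.
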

\begin{proof} Let us fix an $\I \in \MM_{n,\tau}$ such that $\I$ has full
  rank. As $\I$ has full rank, it follows that
  $\sigma^2I \preccurlyeq \Sigma_\I$ and
  $\Sigma_\I^{-1} \preccurlyeq \sigma^{-2}I$.  We then have
  \begin{equation}
    \label{eq:18}
    d_{TV}\inp{\G_{v, \I}, \G_{u, \I}} \leq \sqrt{\frac{1}{2}D_{KL}(\G_{v,
        \I}\Vert\G_{u, \I})} = \frac{1}{2}\sqrt{(v-u)^T\Sigma_{\I}^{-1}(v-u)} \leq
    \frac{1}{2\sigma}\norm[2]{v-u} \leq \frac{1}{2}.
  \end{equation}
  Here, the first inequality is Pinsker's inequality~(see \Cref{fct:Pinsker}),
  the equality is a direct calculation of the Kullback-Leibler divergence
  between two Gaussian distributions with the same covariance matrix, and the
  second inequality is from the observation above that
  $\Sigma_\I^{-1} \preccurlyeq \sigma^{-2}I$.  On the other hand, from
  \cref{lem:1.1} (applied separately to $\G_{u, \I}$ and $\G_{v, \I}$) and the
  triangle inequality for $d_{TV}$, we have
  \begin{displaymath}
    d_{TV}\inp{\G_{v, \I}, \G_{u, \I}} \leq  4\, n^{-5}
    + \sum_{\substack{\I \in \MM_{n,\tau}\\\I\text{ of full rank}}}
    \la_\I d_{TV}\inp{\G_{u,\I}, \G_{v,\I}}.
  \end{displaymath}
  The claim now follows from \cref{eq:18} since the sum of the $\lambda_\I$ over
  all $\I\in \MM_{n,\tau}$ of full rank is at most $1$.
\end{proof}

It follows quite easily from \Cref{lem:1.2} that the multi-step Gaussian chain
has the overlap property discussed in \Cref{sec:overlaps-conductance}.  In the
next section, we discuss this in more detail, and compare the Gaussian chain
with the CHR chain in order to obtain a lower-bound on the $s$-conductance of
the latter.

\section{The \texorpdfstring{$s$}{s}-conductance of coordinate Hit-and-Run}
\label{sec:s-cond-coord}
Our goal in this section is to the give a lower bound on the $s$-conductance of
the coordinate Hit-and-Run chain, by comparing it with the Gaussian chain
studied in the previous section.

Let the Markov scheme corresponding to coordinate Hit-and-Run be denoted $\C$.
We also carry forward the notation $\G$ and $\G^{(\tau)}$ for the Gaussian
scheme and its iterates, respectively, introduced in the previous section.

\begin{lemma} \label{lem:4.3} There are universal constants $C> 1$ and
  $0 < c < 1$ such that the following is true.  Let
  $\sigma > 0, n \geq 2, \tau = \tau(n) = \ceil{20n\ln n}$ be as in the
  statement of \cref{lem:1.1}, and let $K \in \R^n$ be a convex body such that
  for some $R \geq 1$, it is the case that
  $B_\infty \subseteq K \subseteq R\cdot B_\infty$. Set
  $\eps \defeq Cn\sigma\ln n$, and assume that $\sigma$ is chosen so that
  $\epsilon, \sigma < 1/2$. Define the chain $\G^{(\tau)}$ as in
  \cref{sec:gaussian-chain}, in terms of $N(0, \sigma^2)$.  Then, for any
  measurable set $S$ contained in $K$, such that $\eps < \pi_K(S)$ we have
  \begin{displaymath}
    \Phi_{\G^{(\tau)}, \pi_K}(S, K - S)
    \geq \left(\frac{c\sigma}{R\sqrt{n}}\right)\left(\min\inb{\pi_K(S), \pi_K(K - S)}
      - \eps \right).
\end{displaymath}
\end{lemma}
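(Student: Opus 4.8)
The goal is a lower bound on the ergodic flow $\Phi_{\G^{(\tau)},\pi_K}(S, K-S)$, and the natural route is to combine the overlap property of the multi-step Gaussian chain (a consequence of \cref{lem:1.2}) with the isoperimetric machinery of \cref{lem:Kr}. So the first step is to identify the right convex subset $K'$ of $K$ to which the overlap property applies. Since \cref{lem:1.2} requires the base points $u,v$ to be at $\norm[\infty]{\cdot}$-distance more than $100\sigma\ln n$ from $\partial K$, the natural choice is $K' \defeq K_{100\sigma\ln n}$, the ``robust interior'' from \cref{sec:volume-robust-inter}. By \cref{prop:distance-linfty}, $\vol{K'} \geq (1 - 100\sigma\ln n)^n \vol{K} \geq (1 - 100n\sigma\ln n)\vol{K}$, which is at least $(1-\eps)\vol{K}$ once the universal constant $C$ in $\eps = Cn\sigma\ln n$ is chosen large enough (say $C \geq 100$). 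This handles part (i) of \cref{def:overlap}.

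For part (ii) of the overlap property, I would take $\delta \defeq \sigma$ and $\nu \defeq 1/4$: \cref{lem:1.2} says precisely that for $u,v \in K'$ with $\norm[2]{u-v} \leq \sigma$, we have $d_{TV}(\G_v^{(\tau)}, \G_u^{(\tau)}) \leq 3/4 = 1 - 1/4$. Hence $\G^{(\tau)}$ has the $(\eps, \sigma, 1/4)$-overlap property with respect to $K' = K_{100\sigma\ln n}$. The $\norm[2]{\cdot}$-diameter $D$ of $K$ is at most $2R\sqrt{n}$ since $K \subseteq R\cdot B_\infty$ and the $\ell_2$-diameter of $R\cdot B_\infty$ is $2R\sqrt{n}$; in particular $D \geq 2\sigma = 2\delta$ holds because $B_\infty \subseteq K$ forces $R\sqrt n \ge \sqrt n \ge 1 > \sigma$. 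Now I apply \cref{lem:Kr} with the partition $S_1 = S$, $S_2 = K - S$: it gives
\begin{displaymath}
  \Phi_{\G^{(\tau)}, \pi_K}(S) \geq \frac{\nu\delta}{4(D-\delta)}\min\inb{\pi_K(S) - \eps, \pi_K(K-S) - \eps}
  \geq \frac{\sigma}{16(2R\sqrt n - \sigma)}\min\inb{\pi_K(S) - \eps, \pi_K(K-S)-\eps}.
\end{displaymath}
Since $2R\sqrt n - \sigma \leq 2R\sqrt n$, the coefficient is at least $\sigma/(32 R\sqrt n)$, so taking $c = 1/32$ (or any smaller universal constant) gives the claimed bound. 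The one subtlety is that the lemma as stated bounds $\Phi_{\G^{(\tau)},\pi_K}(S)$, whereas the target is stated in terms of $\Phi_{\G^{(\tau)},\pi_K}(S, K-S)$; but by definition $\Phi_{P,Q}(S) = \int_S P(u, S-S^c{}^c)\,Q(du)$ — wait, more carefully, $\Phi_{P,Q}(S) = \int_S P(u, K - S)\,Q(du) = \Phi_{P,Q}(S, K-S)$, so the two quantities literally coincide, and \cref{lem:Kr} also furnishes the bound on $\max\inb{\Phi(S_1,T_2),\Phi(S_2,T_1)}$, each of which is $\leq \Phi(S,K-S)$ by monotonicity of $P(u,\cdot)$ in the set argument. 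So no gap arises here.

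\textbf{Main obstacle.} The genuinely delicate point is the bookkeeping of the universal constants and the hypothesis $\eps,\sigma < 1/2$: I must check that with $\eps = Cn\sigma\ln n < 1/2$ one indeed has $100\sigma\ln n < 1$ (needed for \cref{prop:distance-linfty} to apply with parameter in $(0,1)$) and that the ``$\inf_{z\in\partial K}\norm[\infty]{v-z} > 100\sigma\ln n$'' hypothesis of \cref{lem:1.1,lem:1.2} is exactly what membership in $K' = K_{100\sigma\ln n}$ delivers — there is a mild issue about whether the robust-interior condition gives strict or non-strict inequality at the boundary, but this is handled by noting that the non-strict version suffices after an arbitrarily small shrinkage of the radius, or equivalently by working with $K_{101\sigma\ln n}$ and absorbing the change into $C$. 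Beyond this constant-chasing, the proof is essentially a direct substitution of \cref{lem:1.2} and \cref{prop:distance-linfty} into \cref{lem:Kr}, so I do not expect any structural difficulty.
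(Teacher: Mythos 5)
Your proposal is correct and follows essentially the same route as the paper's own proof: establish the $(\eps,\sigma,1/4)$-overlap property of $\G^{(\tau)}$ with respect to the robust interior $K_{100\sigma\ln n}$ via \cref{prop:distance-linfty} and \cref{lem:1.2}, then apply \cref{lem:Kr} with $D \leq 2R\sqrt{n}$ to get the constant $c = 1/32$. The extra remarks you make (strict versus non-strict distance to $\partial K$, and $\Phi(S)=\Phi(S,K-S)$) are minor points the paper passes over silently, and your handling of them is fine.
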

\begin{proof}
  Recall from \cref{def:overlap} that a Markov scheme $\PP$ on $K$ is
  said to have the $(\epsilon, \delta, \nu)$-overlap property with respect to
  $K'$ if $K'$ is a measurable convex subset of $K$ such that (i)
  $\vol{K'} \geq \vol{K}(1 - \epsilon)$, and further, (ii) for all $u, v \in K'$
  satisfying $\norm[2]{u - v} \leq \delta$, we have
  \begin{displaymath}
    d_{TV}(\PP(u, \cdot), \PP(v, \cdot)) \leq 1 - \nu.
  \end{displaymath}
  Set $C = 100$.  Now, from \cref{prop:distance-linfty}, we see that
  $\vol{K_{C\sigma\ln n}} \geq (1-C\sigma\ln n)^n\vol{K} \geq
  (1-\epsilon)\vol{K}$.  On the other hand, \cref{lem:1.2} implies that
  for any $u, v \in K_{C\sigma\ln n}$ such that $\norm[2]{u - v} \leq \sigma$,
  we have $d_{TV}\inp{\G^{(\tau)}(u, \cdot), \G^{(\tau)}(v, \cdot)} \leq 3/4$.
  It follows that that $\G^{(\tau)}$ has the $(\eps , \sigma, 1/4)-$overlap
  property with respect to $K_{C\sigma \ln n}$.

  Now, we apply \cref{lem:Kr} to the Markov scheme $\G^{(\tau)}$ to see that
  (note that the $\norm[2]{\cdot}$-diameter of $K$ is at most $2R\sqrt{n}$)
  \begin{displaymath}
    \Phi_{\G^{(\tau)}, \pi_K}(S, K - S) \geq
    \left(\frac{\sigma}{32 R\sqrt{n}}\right)\left(\min\inb{\pi_K(S), \pi_K(K - S)} -
      \eps \right).
  \end{displaymath}
  Setting $c = 1/32$ gives us the lemma.
\end{proof}

In order to use the above result for obtaining a lower bound on the
$s$-conductance of the coordinate Hit-and-Run chain, we need the following
simple observation comparing the ergodic flows of the \emph{single step}
coordinate Hit-and-Run chain and the \emph{single step} Gaussian chain (note
that what was analyzed in the previous lemma was an iterate of the latter).

\begin{observation} \label{lem:4.4} Let
  $\sigma > 0, n \geq 2, \tau = \tau(n) = \ceil{20 n \ln n}, R \geq 1$ and the
  convex body $K \in \R^n$ be as in the statement of \cref{lem:4.3}.  Define the
  Gaussian scheme $\G$ as in \cref{sec:gaussian-chain}, in terms of
  $N(0, \sigma^2)$.  Then, for any measurable subset $S$ of $K$, we have
  \begin{displaymath}
    \Phi_{\C, \pi_K}(S, K-S) \geq \frac{\sigma\sqrt{\pi}}{R\sqrt{2}} \cdot \Phi_{\G,
      \pi_K}(S, K - S) .
  \end{displaymath}
\end{observation}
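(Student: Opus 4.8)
The plan is to compare the single-step Gaussian scheme $\G$ with the single-step CHR scheme $\C$ by bounding their transition \emph{densities} pointwise on the relevant chords, and then integrating. Fix $u, v \in K$ with $v - u = \kappa e_i$ for some coordinate $i \in [n]$, and consider the chord $\ell = K \cap (v + e_i \R)$. The CHR scheme puts density $\C_{vu} = \frac{1}{n\abs{\ell}}$ on this chord, while the Gaussian scheme puts density $\G_{vu} = \frac{\gamma(\abs{u-v})}{n}$, where $\gamma$ is the $N(0,\sigma^2)$ density. Since $K \subseteq R\cdot B_\infty$, the chord $\ell$ has length $\abs{\ell} \leq 2R$, so $\frac{1}{n\abs{\ell}} \geq \frac{1}{2Rn}$. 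On the other hand, $\gamma(\abs{u-v}) \leq \gamma(0) = \frac{1}{\sigma\sqrt{2\pi}}$, so $\G_{vu} \leq \frac{1}{n\sigma\sqrt{2\pi}}$. Comparing these, I get $\G_{vu} \leq \frac{1}{\sigma\sqrt{2\pi}} \cdot 2R \cdot \C_{vu} = \frac{2R}{\sigma\sqrt{2\pi}}\C_{vu}$, i.e., $\C_{vu} \geq \frac{\sigma\sqrt{2\pi}}{2R}\G_{vu} = \frac{\sigma\sqrt{\pi}}{R\sqrt{2}}\G_{vu}$, which is exactly the constant appearing in the statement.

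The second step is to lift this pointwise density comparison to a comparison of ergodic flows. For any measurable $S \subseteq K$, unfolding the definitions,
\begin{displaymath}
  \Phi_{\C, \pi_K}(S, K - S) = \int_S \C(u, K-S)\, \pi_K(du)
  = \frac{1}{\vol{K}}\int_S \int_{K-S} \C_{uw}\, dw\, du,
\end{displaymath}
where the inner integral is with respect to one-dimensional Lebesgue measure along the chords (and similarly for $\G$, but there one must also be slightly careful about the rejection mass — however, rejection only keeps the chain at $u \in S$, so it does not contribute to the flow from $S$ to $K - S$, and the density formula for $\G_{uw}$ with $w \neq u$ is all that is used). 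Since the pointwise inequality $\C_{uw} \geq \frac{\sigma\sqrt{\pi}}{R\sqrt{2}}\,\G_{uw}$ holds for every pair $(u,w)$ lying on a common coordinate chord (and both densities vanish otherwise), integrating over $w \in K - S$ and then $u \in S$ preserves the inequality, giving $\Phi_{\C, \pi_K}(S, K-S) \geq \frac{\sigma\sqrt{\pi}}{R\sqrt{2}}\,\Phi_{\G, \pi_K}(S, K-S)$, as desired.

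The only mild subtlety — and the step I would treat most carefully — is the bookkeeping around the atom of the Gaussian chain at the starting point, i.e., the rejection probability. Because $\Phi_{P,\pi_K}(S, K-S) = \int_S P(u, K - S)\,\pi_K(du)$ only counts mass that actually lands in $K - S$, the self-loop mass of $\G$ (which is strictly positive on $K$) is simply irrelevant to the flow $\Phi_{\G,\pi_K}(S, K-S)$, so no correction term is needed: the flow is governed entirely by the off-diagonal densities $\G_{uw}$, $w \neq u$, for which the clean bound $\gamma(\abs{u-w}) \leq \gamma(0)$ applies. Everything else is the routine density comparison above, and the constant $\sigma\sqrt{\pi}/(R\sqrt{2})$ comes out of $\gamma(0) = 1/(\sigma\sqrt{2\pi})$ against the bound $\abs{\ell}\leq 2R$ on chord lengths.
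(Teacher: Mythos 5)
Your proposal is correct and follows essentially the same route as the paper: the paper's proof simply asserts that the Radon--Nikodym derivative of $\C_x$ with respect to $\G_x$ at points $y \neq x$ is at least $\sqrt{2\pi}\,\sigma/(2R)$ (which equals your $\sigma\sqrt{\pi}/(R\sqrt{2})$) and integrates, while you supply the short justification via the chord-length bound $\abs{\ell} \leq 2R$ and $\gamma \leq \gamma(0)$. Your remark that the rejection atom of $\G$ is irrelevant to the flow into $K - S$ is the same implicit point the paper handles by restricting to $y \neq x$.
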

\begin{proof}
  The measures $\G_x$ and $\C_x$ corresponding to transitions from any point
  $x \in S$ for the two schemes $\G$ and $\C$ respectively are absolutely
  continuous with respect to each other at all points $y \neq x$ (in particular,
  for all $y \in K - S$) in the support $\mathrm{supp}(\G_x)$ of $\G_x$.
  Further, the Radon-Nikodym derivative of $\C_x$ with respect to $\G_x$ at any
  $y \neq x$ in $\mathrm{supp}(\G_x)$ is bounded below by
  $(\sqrt{2\pi} \sigma/(2R))$. The claim now follows by integration.
\end{proof}

We now have all the ingredients to give a lower bound on the $s$-conductance of
the coordinate Hit-and-Run chain.

\begin{lemma}\label{lem:5.3}
  Let $n \geq 2, R \geq 1$ and the convex body $K \in \R^n$ be as in the
  statement of \cref{lem:4.3}.  Then, for any $s$ satisfying $0 < s < 1/2$, the
  $s$-conductance $\Phi_s$ of the coordinate Hit-and-Run chain $\C$ on $K$
  satisfies
  \begin{displaymath}
    \Phi_s \defeq \inf_{A \st s <
      \pi_K(A) \leq 1/2} \frac{\Phi_{\C,\pi_K}(A)}{\pi_K(A) - s} \geq
    \Omega\left(\frac{s^2 }{R^2n^{3.5} \ln^{3} n}\right).
  \end{displaymath}
  Here, the constants implicit in the $\Omega(\cdot)$ notation are independent
  of $n, R$ and $K$.
\end{lemma}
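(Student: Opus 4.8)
The plan is to combine the three preceding results — the overlap-based bound on the ergodic flow of the $\tau$-step Gaussian chain (\cref{lem:4.3}), the single-step comparison between the coordinate Hit-and-Run chain $\C$ and the single-step Gaussian chain $\G$ (\cref{lem:4.4}), and a chaining argument relating the $\tau$-step Gaussian flow $\Phi_{\G^{(\tau)},\pi_K}$ to the single-step Gaussian flow $\Phi_{\G,\pi_K}$ — and then to optimize over the free parameter $\sigma$.

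First I would fix $A$ with $s < \pi_K(A) \le 1/2$ and set $S = A$, $\bar S = K - S$. The key missing link is a standard fact for reversible chains: if $P$ is reversible with respect to $Q$, then $\Phi_{P^{(\tau)},Q}(S,\bar S) \le \tau\,\Phi_{P,Q}(S,\bar S)$. This follows because $\mu_{k}$-to-$\mu_{k+1}$ transitions can only decrease (up to the factor accounting for each step) the "cut" mass; more concretely, one writes $Q(S)-\langle Q_S, P^{(\tau)}\mathbf 1_S\rangle$ as a telescoping sum of $\tau$ single-step decrements, each bounded by $\Phi_{\G,\pi_K}(S,\bar S)$ by reversibility and the fact that $\mathbf 1_S - P\mathbf 1_S$ has controlled $L^1$-norm against $\pi_K$. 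Applying this with $P = \G$, $\tau = \lceil 20 n\ln n\rceil$ gives
\begin{displaymath}
  \Phi_{\G,\pi_K}(S,\bar S) \ \ge\ \frac{1}{\tau}\,\Phi_{\G^{(\tau)},\pi_K}(S,\bar S)
  \ \ge\ \frac{1}{\tau}\left(\frac{c\sigma}{R\sqrt n}\right)\bigl(\min\{\pi_K(S),\pi_K(\bar S)\} - \eps\bigr),
\end{displaymath}
using \cref{lem:4.3}, valid provided $\eps = Cn\sigma\ln n < \pi_K(S)$ and $\eps,\sigma < 1/2$. Then \cref{lem:4.4} upgrades this to
\begin{displaymath}
  \Phi_{\C,\pi_K}(S,\bar S)\ \ge\ \frac{\sigma\sqrt\pi}{R\sqrt 2}\cdot\frac{1}{\tau}\left(\frac{c\sigma}{R\sqrt n}\right)\bigl(\min\{\pi_K(S),\pi_K(\bar S)\} - \eps\bigr)
  \ =\ \Omega\!\left(\frac{\sigma^2}{R^2 n^{1.5}\,\tau}\right)\bigl(\pi_K(S)-\eps\bigr),
\end{displaymath}
since $\pi_K(S) \le 1/2$ forces $\min\{\pi_K(S),\pi_K(\bar S)\} = \pi_K(S)$.

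The final step is the parameter choice. To turn $\pi_K(S) - \eps$ into something proportional to $\pi_K(S) - s$, I would choose $\sigma$ so that $\eps = Cn\sigma\ln n = s/2$, i.e. $\sigma = \Theta(s/(n\ln n))$; this is admissible since $s < 1/2$ guarantees $\eps,\sigma<1/2$, and the constraint $\eps < \pi_K(S)$ holds because $\pi_K(S) > s > \eps$. Then $\pi_K(S) - \eps = \pi_K(S) - s/2 \ge \tfrac12(\pi_K(S) - s) + \tfrac12 s \ge \tfrac12(\pi_K(S)-s)$, and also $\pi_K(S)-\eps \ge s/2$, so in particular $\pi_K(S) - \eps \ge \tfrac12(\pi_K(S) - s)$. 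Dividing by $\pi_K(S) - s$ and plugging in $\sigma^2 = \Theta(s^2/(n^2\ln^2 n))$ and $\tau = \Theta(n\ln n)$ yields
\begin{displaymath}
  \frac{\Phi_{\C,\pi_K}(A)}{\pi_K(A)-s}\ \ge\ \Omega\!\left(\frac{\sigma^2}{R^2 n^{1.5}\tau}\right)
  \ =\ \Omega\!\left(\frac{s^2/(n^2\ln^2 n)}{R^2 n^{1.5}\cdot n\ln n}\right)
  \ =\ \Omega\!\left(\frac{s^2}{R^2 n^{3.5}\ln^3 n}\right),
\end{displaymath}
as claimed; taking the infimum over admissible $A$ gives the bound on $\Phi_s$.

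I expect the main obstacle to be the chaining inequality $\Phi_{\G^{(\tau)},\pi_K}(S,\bar S) \le \tau\,\Phi_{\G,\pi_K}(S,\bar S)$ — making the telescoping argument rigorous for a general reversible Markov scheme (as opposed to a finite chain) requires a careful $L^2(\pi_K)$ or $L^1(\pi_K)$ estimate showing that one application of $\G$ moves at most $\Phi_{\G,\pi_K}(S,\bar S)$ units of mass across the cut, and that iterating is subadditive; everything else is bookkeeping and the final optimization of $\sigma$.
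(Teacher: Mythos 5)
Your proposal is correct and follows essentially the same route as the paper: \cref{lem:4.3} plus \cref{lem:4.4} plus the chaining bound $\Phi_{\G^{(\tau)},\pi_K}(S,K-S)\le\tau\,\Phi_{\G,\pi_K}(S,K-S)$, followed by choosing $\sigma=\Theta\!\left(s/(n\ln n)\right)$ (the paper takes $\eps=s$ rather than $s/2$, which changes nothing). The chaining step you flag as the main obstacle is in fact immediate and needs no telescoping $L^1$ estimate: start $x_0\sim\pi_K$, run $\tau$ steps of $\G$, observe that by stationarity each $x_i\sim\pi_K$ so that $\Pr[x_i\in S,\ x_{i+1}\in K-S]=\Phi_{\G,\pi_K}(S,K-S)$, and union-bound the event $\{x_0\in S,\ x_\tau\in K-S\}$ over the $\tau$ crossing events $\{x_i\in S,\ x_{i+1}\in K-S\}$, since the trajectory must cross the cut at some step.
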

\begin{proof}
  Let the constants $C$ and $c$ be as in the statement of \Cref{lem:4.3}.  Set
  $\sigma = s/(Cn \ln n) \leq 1/2$, so that the quantity $\epsilon$ in
  \Cref{lem:4.3} satisfies $\epsilon = s < 1/2$.  Let
  $\tau(n) = \ceil{20n \ln n}$, and define the schemes $\G$ and $\G^{(\tau)}$ in
  terms of $N(0, \sigma^2)$, again as in the statement of \Cref{lem:4.3}.

   Let $S$ be an arbitrary measurable subset of $K$.  Consider the following
  experiment. Let $x_0$ be sampled from $\pi_K$, and let
  $x_0, x_1, \dots, x_\tau$ be a walk of length $\tau$ according to the scheme
  $\G$.  For $0 \leq i < \tau$, let $E_i$ denote the event that $x_i \in S$ and
  $x_{i+1} \in K - S$.  Similarly, let $E$ denote the event that $x_0 \in S$ and
  $x_\tau \in K - S$.  Note that $\Pr[E] \leq \sum_{i=0}^{\tau-1}\Pr[E_i]$.  On
  the other hand, since $\G$ is reversible with respect to $\pi_K$, we have
  $x_i \sim \pi_K$ for each $0 \leq i < \tau$, so that
  $\Pr[E_i] = \Phi_{\G, \pi_K}(S, K - S)$.  Similarly,
  $\Pr[E] = \Phi_{\G^{(\tau)}, \pi_K}(S, K - S)$.  Thus, we get that
  \begin{equation}
    \Phi_{\G, \pi_K}(S, K - S) \geq \frac{1}{\tau}\cdot \Phi_{\G^{(\tau)},
      \pi_K}(S, K - S).
  \end{equation}
  Combining this with \Cref{lem:4.4}, we therefore obtain
  \begin{equation}
    \Phi_{\C, \pi_K}(S, K - S) \geq \frac{c'\sigma}{\tau R} \Phi_{\G^{(\tau)},
      \pi_K}(S, K - S),\label{eq:4.8}
  \end{equation}
  where $c'$ is an absolute positive constant.

  Finally, applying Lemma~\ref{lem:4.3}, we than see that for any measurable
  subset $S \subset K$ such that $\pi(S) > s$, we have
  \begin{equation}
    \Phi_{\G^{(\tau)}, \pi_K}(S, K - S)
    \geq
    \left(\frac{c\sigma}{R\sqrt{n}}\right)
    \left(\min\inb{\pi_K(S), \pi_K(K-S)}  - s\right).
    \label{eq:17}
  \end{equation}
  Combining this with \cref{eq:4.8}, we see that for every measurable $S
  \subseteq K$  such that $s < \pi_K(S) \leq 1/2$, we have
  \begin{align}
    \Phi_{\C, \pi_K}(S, K - S)
    & \geq
      \left(\frac{c\cdot c' \cdot \sigma^2}{\tau \cdot R^2\sqrt{n}}\right)
      \left(\min\inb{\pi_K(S), \pi_K(K-S)}  - s\right) \nonumber\\
    & \geq \left(\frac{c'' s^2}{R^2 n^{3.5} \ln^3 n}\right)
      \inp{\pi_K(S)  -s},
      \label{eq:19}
  \end{align}
  where $c''$ is an absolute positive constant.  The claim follows immediately
  from \cref{eq:19}.
\end{proof}

\section{Mixing time of coordinate Hit-and-Run}
\label{sec:final}
We now restate and prove our main theorem (\Cref{thm:main-intro}).  Given the
lower-bound on the $s$-conductance of the coordinate Hit-and-Run scheme derived
in \cref{lem:5.3}, the proof follows immediately from \cref{thm:mixing}.
\begin{theorem}\label{thm:main}
  There exists a positive constant $C$ such that the following is true.  Fix
  $\epsilon \in (0, 1/2), n \geq 2$ and $R \geq 1$.  Let $K \subset \R^n$ be a
  closed convex body such that $B_\infty \subseteq K \subseteq R\cdot B_\infty$.
  Suppose that $\mu_0$ is absolutely continuous with respect to $\pi_K$ and that
  its Radon-Nikodym derivative with respect to $\pi_K$ is bounded above by
  $M \geq 1$.  Then, for any
  $k \geq \ceil{\frac{C M^4 R^4 n^7 \ln^6 n\ln(2M/\epsilon)}{\eps^4}}$, we
  have $d_{TV}(\mu_{k, \C}, \pi_K) \leq \eps.$
\end{theorem}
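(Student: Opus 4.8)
The plan is to feed the lower bound on the $s$-conductance of the coordinate Hit-and-Run scheme $\C$ obtained in \cref{lem:5.3} into the Lovász--Simonovits mixing estimate of \cref{thm:mixing}, and then optimise over the free parameter $s \in (0, 1/2)$. No new geometric input is needed: the substantive work --- the comparison of $\C$ with the Gaussian chain $\G^{(\tau)}$, the overlap property, and the isoperimetric argument --- is already encapsulated in \cref{lem:5.3}.

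First I would record the two facts needed to invoke \cref{thm:mixing} with $P = \C$ and $Q = \pi_K$. Since $K$ contains $B_\infty$ it is full-dimensional, so $\pi_K$ has a density and is atom-free; and $\C$ is reversible (hence stationary) with respect to $\pi_K$, as already noted. Because $\mu_0$ has Radon--Nikodym derivative at most $M$ with respect to $\pi_K$, any measurable $A$ with $\pi_K(A) \le s$ satisfies $|\mu_0(A) - \pi_K(A)| \le \max\{\mu_0(A), \pi_K(A)\} \le M s$, so the quantity $H_s$ in \cref{thm:mixing} obeys $H_s \le M s$. Thus \cref{thm:mixing} gives
\[
  d_{TV}(\mu_{k, \C}, \pi_K) \le \Bigl(1 + \tfrac{(1-\Phi_s^2/2)^k}{s}\Bigr) H_s \le H_s + M\,(1-\Phi_s^2/2)^k .
\]

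Next I would choose $s \defeq \epsilon/(2M)$. Since $\epsilon < 1/2$ and $M \ge 1$ this lies in $(0, 1/2)$, as required both by \cref{thm:mixing} and by the hypotheses of \cref{lem:5.3}, and it forces $H_s \le M s = \epsilon/2$, which disposes of the first term. For the second term, \cref{lem:5.3} yields $\Phi_s = \Omega\bigl(s^2 / (R^2 n^{3.5} \ln^3 n)\bigr)$, hence $\Phi_s^2 = \Omega\bigl(\epsilon^4 / (M^4 R^4 n^7 \ln^6 n)\bigr)$. Using $1 - x \le e^{-x}$, the bound $M (1-\Phi_s^2/2)^k \le \epsilon/2$ holds once $k \ge (2/\Phi_s^2)\ln(2M/\epsilon)$; substituting the estimate for $\Phi_s^2$ shows this is implied by $k \ge \ceil{C M^4 R^4 n^7 \ln^6 n \,\ln(2M/\epsilon)/\epsilon^4}$ for an appropriate absolute constant $C$. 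Combining the two estimates gives $d_{TV}(\mu_{k,\C}, \pi_K) \le \epsilon/2 + \epsilon/2 = \epsilon$.

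I do not expect a genuine obstacle at this final stage. The only point requiring a little care is the tension in the choice of $s$: a small $s$ is needed to make the warm-start discrepancy $H_s \le Ms$ small, but $s$ also enters the conductance bound --- directly through the $s^2$ factor, and indirectly through the choice $\sigma = s/(Cn\ln n)$ made inside the proof of \cref{lem:5.3} --- so shrinking $s$ degrades $\Phi_s$. Since every dependence is polynomial, balancing the two terms at $s \asymp \epsilon/M$ is harmless, and the stated exponents $M^4 R^4 n^7$ and $1/\epsilon^4$ drop out directly from squaring the conductance bound. What remains is the routine bookkeeping of absorbing all universal constants into a single $C$.
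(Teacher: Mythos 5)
Your proposal is correct and follows essentially the same route as the paper's own proof: invoke \cref{thm:mixing} with $H_s \le Ms$, feed in the $s$-conductance bound of \cref{lem:5.3}, and set $s = \epsilon/(2M)$ so that both the warm-start term and the exponentially decaying term are at most $\epsilon/2$, which yields the stated $k = \ceil{C M^4 R^4 n^7 \ln^6 n \ln(2M/\epsilon)/\epsilon^4}$.
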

\begin{proof}
  Let $0 < s < 1/2$ be arbitrary.  We apply \cref{thm:mixing} to the coordinate
  Hit-and-Run scheme $\C$ and its stationary distribution $\pi_K$.  Let $H_s$ be
  as defined in \cref{thm:mixing}.  The absolute continuity constraint imposed
  on $\mu_0$ with respect to $\pi_K$ then implies that $H_s \leq M\cdot s$.  Let
  $\mu_k$ denote the distribution after $k$ iterations of $\C$ starting with the
  initial distribution $\mu_0$. \Cref{thm:mixing}, along with the bound on
  $\Phi_s(\C)$ obtained in \Cref{lem:5.3} then implies that (for some absolute
  constant $c$)
  \begin{displaymath}
    d_{TV}(\mu_{k, \C}, \pi_K) \leq Ms
    + M\exp\inp{\frac{-c
        ks^4 }{R^4n^7 \ln^6 n}}.
  \end{displaymath}
  Choosing $s = \epsilon/(2M)$, we then see that for any
  $k \geq \ceil{\frac{16 M^4 R^4n^7 \ln^6 n\ln(2M/\epsilon)}{c\eps^4}}$, we
  have $d_{TV}(\mu_{k, \C}, \pi_K) \leq \eps.$
\end{proof}

\providecommand{\bysame}{\leavevmode\hbox to3em{\hrulefill}\thinspace}
\providecommand{\MR}{\relax\ifhmode\unskip\space\fi MR }
\providecommand{\MRhref}[2]{\href{http://www.ams.org/mathscinet-getitem?mr=#1}{#2}
}
\providecommand{\href}[2]{#2}

\appendix

\section{Some standard inequalities}

In this section, we collect a few standard inequalities that we use at various
places in the main body of the paper.
\begin{fact}[\textbf{Chernoff bound, see, e.g.,~\cite[Theorem 4.1]{motwani-raghavan}}]
  Let $X_1, X_2, \dots, X_n$ be independent Bernoulli random variables, and let
  $X \defeq \sum_{i=1}^n X_i$. Define $\mu \defeq \E[X]$.  Then, for any
  $\delta > 0$,
\begin{displaymath}
  \Pr[X > (1+\delta)\mu] < \exp\inp{-\mu\cdot\inp{(1+\delta)\ln(1+\delta) - \delta}}.
\end{displaymath}
In particular, when $\delta = 1$, $\Pr[X > 2\mu] < \exp(-\mu/3).$\label{fct:chernoff-bound}
\end{fact}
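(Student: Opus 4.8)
The plan is to use the standard exponential moment method (the ``Chernoff trick''). First I would fix an arbitrary $t > 0$ and apply Markov's inequality to the nonnegative random variable $e^{tX}$, obtaining
\[
  \Pr[X > (1+\delta)\mu] = \Pr\insq{e^{tX} > e^{t(1+\delta)\mu}} \leq e^{-t(1+\delta)\mu}\, \E\insq{e^{tX}}.
\]
Since the $X_i$ are independent, the moment generating function factorizes as $\E[e^{tX}] = \prod_{i=1}^n \E[e^{tX_i}]$. Writing $p_i \defeq \Pr[X_i = 1]$, each factor equals $1 - p_i + p_i e^t = 1 + p_i(e^t - 1)$, and the elementary inequality $1 + x \leq e^x$ (valid for all real $x$) gives $\E[e^{tX_i}] \leq \exp\inp{p_i(e^t - 1)}$. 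Multiplying these bounds and using $\mu = \sum_i p_i$ yields $\E[e^{tX}] \leq \exp\inp{(e^t - 1)\mu}$, and hence
\[
  \Pr[X > (1+\delta)\mu] \leq \exp\inp{\mu\inp{(e^t - 1) - t(1+\delta)}}.
\]

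Next I would optimize the exponent over $t > 0$. The map $t \mapsto (e^t - 1) - t(1+\delta)$ has derivative $e^t - (1+\delta)$, which vanishes at $t = \ln(1+\delta)$; this value is positive because $\delta > 0$, so it lies in the allowed range. Substituting it gives $e^t - 1 = \delta$ and therefore an exponent of $\mu\inp{\delta - (1+\delta)\ln(1+\delta)}$, which is precisely the claimed bound. To get the strict inequality: if $\mu = 0$ then $X = 0$ almost surely so the left side is $0 < 1$; if $\mu > 0$ then some $p_i > 0$, and since $1 + x < e^x$ strictly at $x = p_i(e^t - 1) \neq 0$, at least one of the factor bounds above is strict.

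For the special case $\delta = 1$, it then suffices to note that $(1+1)\ln 2 - 1 = 2\ln 2 - 1 > 1/3$ (as $2\ln 2 > 4/3$), so $\exp\inp{-\mu(2\ln 2 - 1)} < \exp(-\mu/3)$ when $\mu > 0$, and the inequality is trivial when $\mu = 0$. I do not expect any real obstacle here: the only mild points requiring care are the strictness of the inequality and the degenerate case $\mu = 0$, both handled as above; the rest is a one-line application of Markov's inequality, independence, and $1 + x \leq e^x$.
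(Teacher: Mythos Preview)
Your argument is correct and is exactly the standard exponential-moment proof of the Chernoff bound. Note, however, that the paper does not actually prove this fact: it is stated in the appendix as a citation to \cite[Theorem 4.1]{motwani-raghavan} with no accompanying proof, so there is no ``paper's own proof'' to compare against. What you have written is essentially the textbook derivation one would find in that reference.
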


\begin{fact}[\textbf{Gaussian tail bound}]\label{fct:gaussian-tail-bound}
  Fix $t \geq \sigma > 0$ and let $X \sim N(0, \sigma^2)$.  Then,
  \begin{displaymath}
    \Pr[\abs{X} \geq t] \leq \exp\inp{-\frac{t^2}{2\sigma^2}}.
  \end{displaymath}
\end{fact}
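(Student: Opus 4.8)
The plan is to reduce to the standard normal by scaling and then estimate the Gaussian tail integral. Writing $X = \sigma Z$ with $Z \sim N(0,1)$ and setting $a \defeq t/\sigma$, the hypothesis $t \geq \sigma > 0$ becomes $a \geq 1$, and the claim is equivalent to $\Pr[\abs{Z} \geq a] \leq \exp(-a^2/2)$ for all $a \geq 1$.

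First I would use symmetry to write $\Pr[\abs{Z} \geq a] = \frac{2}{\sqrt{2\pi}}\int_a^\infty e^{-x^2/2}\,dx$. On the range of integration we have $x/a \geq 1$, so $\int_a^\infty e^{-x^2/2}\,dx \leq \int_a^\infty \frac{x}{a}\, e^{-x^2/2}\,dx = \frac{1}{a}\, e^{-a^2/2}$, where the last equality is the elementary antiderivative $\frac{d}{dx}\inp{-e^{-x^2/2}} = x e^{-x^2/2}$. This yields $\Pr[\abs{Z}\geq a] \leq \sqrt{\tfrac{2}{\pi}}\cdot\frac{1}{a}\, e^{-a^2/2}$.

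Finally I would invoke the hypothesis $a \geq 1$: since $\sqrt{2/\pi} < 1$, the prefactor $\sqrt{2/\pi}/a$ is at most $1$, and hence $\Pr[\abs{Z}\geq a] \leq e^{-a^2/2}$. Undoing the scaling ($a = t/\sigma$, so $a^2/2 = t^2/(2\sigma^2)$) recovers the stated inequality.

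There is no substantial obstacle; the only point requiring care is that the constant $\sqrt{2/\pi}$ emerging from the tail-integral estimate is strictly less than $1$ and that one genuinely needs $a = t/\sigma \geq 1$ in order to absorb the factor $\sqrt{2/\pi}/a$ into the exponential — this is exactly why the hypothesis $t \geq \sigma$ appears. (I note that the cruder moment-generating-function bound $\Pr[Z \geq a] \leq e^{-a^2/2}$, obtained by optimizing $\Pr[Z \geq a] \leq e^{-\lambda a + \lambda^2/2}$ at $\lambda = a$, would lose a factor of $2$ when passing from $Z$ to $\abs{Z}$, and so is not by itself tight enough; the integral argument above avoids this loss.)
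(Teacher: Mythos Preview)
Your argument is correct and is essentially identical to the paper's own proof: the paper also standardizes to $Z\sim N(0,1)$, writes the tail as $\sqrt{2/\pi}\int_s^\infty e^{-x^2/2}\,dx$ with $s=t/\sigma\ge 1$, inserts the factor $x/s\ge 1$ to get $\sqrt{2/(\pi s^2)}\,e^{-s^2/2}$, and then uses $s\ge 1$ to conclude. Your additional remark about why the MGF bound would lose a factor of $2$ is a nice clarification but is not in the paper.
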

\begin{proof}
  Define $s \defeq t/\sigma \geq 1$. Let $Z \sim N(0, 1)$. Then,
  \begin{displaymath}
    \Pr[\abs{X} \geq t] = \Pr[\abs{Z} \geq s] =
    \sqrt{\frac{2}{\pi}}\int\limits_{s}^{\infty}\exp(-x^2/2)dx \leq
    \sqrt{\frac{2}{\pi s^2}}\int\limits_{s}^{\infty}x\exp(-x^2/2)dx
    = \sqrt{\frac{2}{\pi s^2}}\exp(-s^2/2).
  \end{displaymath}
  The claimed inequality follows since $s \geq 1$.
\end{proof}

\begin{fact}[\textbf{Pinsker's inequality, see, e.g.,~\cite[Theorem 4.19]{BLM}}]
  \label{fct:Pinsker}
  Let $P$ and $Q$ be probability measures on the same $\sigma$-field, such that
  $P$ is absolutely continuous with respect to $Q$.  Let $D_{KL}(P\Vert Q)$ denote
  the Kullback-Leibler divergence between $P$ and $Q$.  Then
  \begin{displaymath}
    d_{TV}(P, Q) \leq \sqrt{\frac{1}{2}D_{KL}(P\Vert Q)}.
  \end{displaymath}
\end{fact}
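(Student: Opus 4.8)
The plan is to reduce the general statement to the case of Bernoulli (two-point) distributions by a coarsening argument, and then to verify the resulting two-point inequality by elementary calculus. First I would dispose of the trivial case: if $D_{KL}(P\Vert Q) = +\infty$ there is nothing to prove, so assume it is finite (in particular $P \ll Q$, as hypothesized). Fix a common $\sigma$-finite dominating measure $\mu$ (for instance $\mu = Q$), and let $p = dP/d\mu$, $q = dQ/d\mu$. Set $A \defeq \inb{p > q}$, and on the two-point space $\inb{A, A^c}$ consider the pushforward (coarsened) distributions $\bar P = (P(A), P(A^c))$ and $\bar Q = (Q(A), Q(A^c))$. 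I would record two facts about this coarsening. (i) Total variation is \emph{exactly} preserved: since $d_{TV}(P, Q) = \sup_{B}\abs{P(B) - Q(B)}$ and this supremum is attained at $B = A$ (the set where the density of $P$ exceeds that of $Q$), we get $d_{TV}(P, Q) = \abs{P(A) - Q(A)} = d_{TV}(\bar P, \bar Q)$. (ii) Kullback--Leibler divergence cannot increase under coarsening: applying the log-sum inequality $\sum_i a_i \ln(a_i / b_i) \geq \inp{\sum_i a_i}\ln\inp{\sum_i a_i / \sum_i b_i}$ separately to the restrictions of the densities to $A$ and to $A^c$ and summing yields $D_{KL}(P\Vert Q) = \int_{A} p\ln(p/q)\,d\mu + \int_{A^c} p\ln(p/q)\,d\mu \geq D_{KL}(\bar P\Vert\bar Q)$.

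Combining (i) and (ii), it suffices to prove the claim for two-point distributions, i.e.\ to show that for all $p, q \in [0,1]$,
\begin{displaymath}
  2(p - q)^2 \leq p\ln\frac{p}{q} + (1-p)\ln\frac{1-p}{1-q},
\end{displaymath}
where the right-hand side is interpreted as $D_{KL}(\mathrm{Ber}(p)\Vert\mathrm{Ber}(q))$ with the usual conventions (and equal to $+\infty$ when $q \in \inb{0,1}$ but $p \neq q$, in which case the inequality is again trivial). For the remaining range $q \in (0,1)$ I would fix $q$ and let $f(p)$ denote the difference (right-hand side minus left-hand side), for $p \in [0,1]$. Then $f(q) = 0$; a short computation gives $f'(p) = \ln(p/q) - \ln\bigl((1-p)/(1-q)\bigr) - 4(p-q)$, so $f'(q) = 0$; and $f''(p) = \tfrac{1}{p} + \tfrac{1}{1-p} - 4 = \tfrac{1}{p(1-p)} - 4 \geq 0$ on $(0,1)$ since $p(1-p) \leq \tfrac14$. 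Hence $f$ is convex on $(0,1)$ with a stationary point at $p = q$, so $f \geq 0$ on $(0,1)$, and by continuity on all of $[0,1]$. Taking square roots and chaining through (i) and (ii) gives $d_{TV}(P, Q) = d_{TV}(\bar P, \bar Q) \leq \sqrt{\tfrac12 D_{KL}(\bar P\Vert\bar Q)} \leq \sqrt{\tfrac12 D_{KL}(P\Vert Q)}$, as required.

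The calculus in the last step presents no real difficulty. The step that most needs care is the coarsening reduction: one must check that $d_{TV}$ is exactly (not merely approximately) preserved by passing to the binary partition $\inb{A, A^c}$ with $A = \inb{p > q}$, and that the log-sum inequality is applied in the direction that makes $D_{KL}$ \emph{decrease} rather than increase. The measure-theoretic edge cases — a set on which $q = 0$ but $p > 0$, or boundary values of $p$ or $q$ — are then absorbed by the observation that they force $D_{KL}(P\Vert Q) = +\infty$ (or reduce the two-point inequality to an equality), so they do not require separate treatment.
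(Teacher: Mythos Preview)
Your argument is correct and is the classical proof of Pinsker's inequality: one first reduces to the two-point case by passing to the binary partition $\inb{A, A^c}$ with $A = \inb{p > q}$ (which preserves $d_{TV}$ exactly and can only decrease $D_{KL}$, by the log-sum/data-processing inequality), and then verifies the resulting scalar inequality $2(p-q)^2 \leq D_{KL}(\mathrm{Ber}(p)\Vert\mathrm{Ber}(q))$ by the convexity computation you carry out. The edge cases and the direction of the log-sum inequality are handled as you describe.

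As for comparison with the paper: the paper does not give its own proof of this fact at all. It is recorded in the appendix purely as a cited standard inequality (with a pointer to \cite[Theorem~4.19]{BLM}) and is used only once, in \cref{eq:18}, to bound the total variation between two Gaussians with the same covariance. So there is no alternative argument in the paper to contrast with yours; you have simply supplied a self-contained proof where the paper was content to quote the literature.
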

\end{document}